\algrenewcommand\algorithmiccomment[1]{\quad\(\triangleright\) #1}
\algrenewcommand\algorithmiccomment[1]{\quad{\scriptsize$\triangleright$~#1}}
\newif\ifarxiv
  \providecommand{\faIcon}[1]{} 
\newtheorem{theorem}{Theorem}
\newtheorem{defi}{Definition}
\newtheorem{lemma}{Lemma}
\newtheorem{example}{Example}
\newtheorem{proposition}{Proposition}
\newtheorem{corollary}{Corollary}
\definecolor{orcidlogo}{rgb}{0.37,0.48,0.13}
\definecolor{unilogo}{rgb}{0.16, 0.26, 0.58}
\definecolor{maillogo}{rgb}{0.58, 0.16, 0.26}
\definecolor{darkblue}{rgb}{0.0,0.0,0.0}
\title[Analogy between List Coloring Problems and the Interval $k$-$(\gamma,\mu)$-choosability property: theoretical aspects of complexity]{Analogy between List Coloring Problems and the Interval $k$-$(\gamma,\mu)$-choosability property: theoretical aspects of complexity}
\author[Gama et al. 2025]{
\affil{\textbf{Simone Gama}~~[~\textbf{Universidade Federal do Amazonas}~|\href{mailto:simone.gama@icomp.ufam.edu.br}{~\textbf{\textit{simone.gama@icomp.ufam.edu.br}}}~]}

\affil{\textbf{Rosiane de Freitas}~~[~\textbf{Universidade Federal do Amazonas}~|\href{mailto:rosiane@icomp.ufam.edu.br}{~\textbf{\textit{rosiane@icomp.ufam.edu.br}}}~]}

}
\begin{document}

\begin{frontmatter}
\maketitle

\begin{abstract}
\textbf{Abstract.~}
\noindent
This work investigates structural and computational aspects of list-based graph coloring under
interval constraints. Building on the framework of analogous and p-analogous problems, we show
that classical List Coloring, $\mu$-coloring, and $(\gamma,\mu)$-coloring share strong
complexity-preserving correspondences on graph classes closed under pendant-vertex extensions.
These equivalences allow hardness and tractability results to transfer directly among the models. Motivated by applications in scheduling and resource allocation with bounded ranges, we introduce
the interval-restricted $k$-$(\gamma,\mu)$-coloring model, where each vertex receives an interval
of exactly $k$ consecutive admissible colors. We prove that, although $(\gamma,\mu)$-coloring is
NP-complete even on several well-structured graph classes, its $k$-restricted version becomes
polynomial-time solvable for any fixed $k$. Extending this formulation, we define
$k$-$(\gamma,\mu)$-choosability and analyze its expressive power and computational limits. Our
results show that the number of admissible list assignments is drastically reduced under interval
constraints, yielding a more tractable alternative to classical choosability, even though the
general decision problem remains located at high levels of the polynomial hierarchy. Overall, the paper provides a unified view of list-coloring variants through structural
reductions, establishes new complexity bounds for interval-based models, and highlights the
algorithmic advantages of imposing fixed-size consecutive color ranges.

\end{abstract}

\begin{keywords}
List Coloring, choosability, k-$(\gamma,\mu)$-choosability, Computational Complexity, Polynomial Hierarchy
\end{keywords}

\end{frontmatter}

\section{Introduction}\label{sec:intro}

Graphs are mathematical structures composed of a set of vertices $V$ and a set of edges $E$, 
where $G = (V,E)$. In the classical vertex coloring problem, one assigns colors to vertices so 
that adjacent vertices receive distinct colors. The minimum number of colors required for such 
an assignment is the chromatic number $\chi(G)$, and determining it is a central problem in 
graph theory, known to be NP-hard \citep{karp1972reducibility}. A $k$-coloring of $G$ induces a 
partition of $V$ into $k$ independent sets, and since all vertices in a clique are pairwise 
adjacent, they must receive different colors, implying $\chi(G) \ge \omega(G)$. Classical bounds 
relating $\chi(G)$ to structural parameters, such as clique number and maximum degree, are well 
studied \citep{brooks1941colouring,lovasz1975three}, and several works explore these relationships 
\citep{pardalos1998graph,lovasz1979shannon,Erdos1959GraphProbab,randerath2004vertex,mihokchromatic}.

Several constrained versions of coloring have been proposed. A fundamental generalization is 
\emph{List Coloring}, in which each vertex $v$ is assigned a list $L(v)$ of allowed colors, and 
the objective is to choose a proper coloring consistent with all lists. Introduced independently 
by Erd\H{o}s et al.\ in 1979 \citep{erdos:1979} and Vizing in 1976 \citep{vizing1976coloring}, 
List Coloring generalizes classical $k$-coloring, since the latter corresponds to the case 
$L(v)=\{1,\dots,k\}$ for all $v \in V(G)$. Due to this added generality and vertex-wise 
constraints, List Coloring is NP-complete in general graphs 
\citep{erdos:1979,vizing1976coloring,jansen1997generalized}.

The List coloring also has its variations, among them the Precoloring extension,  $\mu$-coloring and the $(\gamma,\mu)$-coloring:

\begin{itemize}
    \item \textbf{$\mu$-coloring}: This is a version of List coloring where each vertex may have different quantities of available colors, controlled by a parameter $\mu$. Introduced by Bonomo~\textit{et al.} \citep{bonomo2005between}, given a graph $G$ and a function $\mu: V \rightarrow \mathbb{N}$, $G$ is $\mu$-colorable if there exists a coloring $f$ of $G$ such that $f(v) \leq \mu(v)$ for every $v\in V$

    \item \textbf{$(\gamma,\mu)$-coloring}: An even more refined model that introduces two functions, $\gamma$ and $\mu$, to determine the restrictions on color assignment to vertices, allowing a more flexible approach to different types of problems.  Introduced by Bonomo~\textit{et al.} too \citep{bonomo2009exploring}, give a graph $G$ and a function $\gamma,\mu:V(G)\rightarrow \mathbb{N}$ such that $\gamma(v)\leq\mu(v)$ for every $v\in V(G)$, $G$ has a $(\gamma,\mu)$-coloring if there exists a coloring $f$ of $G$  such that $\gamma(v)\leq f(v)\leq\mu(v)$ for every $v\in V(G)$.

     \item \textbf{Precoloring extension}: In this variation, some vertices already have predefined colors, and the coloring must be extended to the rest of the graph while respecting this precoloring. Introduced by Tuza~\textit{et al.}~\citep{biro1992precoloring}, takes as input a graph \( G = (V, E) \), a subset \( W \subseteq V \), a coloring \( f' \) of \( W \), and a natural number \( k \). The goal is to decide whether \( G \) admits a \( k \)-coloring \( f \) such that \( f(v) = f'(v) \) for every \( v \in W \).
\end{itemize}

The List coloring problem also has some properties in relation to its color list, is when the color lists assigned to the vertices of $ G $ have size $ k $. A graph is $k$-choosable, or \textit{choice number} if it has a proper List coloring, no matter how one assigns a list of $k$ colors to each vertex. The choice number of a graph $G$, denoted by $\chi_{\ell}(G)$, is the least number $k$ such that $G$ is $k$-choosable. It is easy to see that the chromatic number $\chi_G$ is less than or equal to the choice number $\chi_{\ell}(G)$ (the smallest $k$ so that $G$ is $k$-choosable). Determining the choosability of a graph is a challenging problem, and for certain classes of graphs, many open questions remain.

Paul Erdős and his collaborators initiated studies on List coloring and choosability in graphs based on Dinitz’s Problem, introduced by Jeff Dinitz in 1978 \citep{dinitz1980lower, hofmannproofs, lastrina2012list}. This problem, originally formulated in terms of Latin squares, can be reinterpreted as a special case of list coloring in bipartite graphs, where each vertex has a specific set of allowed colors. Erdős, using probabilistic methods, demonstrated the existence of certain valid colorings under list constraints, making significant contributions to graph theory.

Despite its theoretical and practical relevance, List coloring and its variations remain an underexplored and challenging field, especially for specific graph classes. The difficulty in establishing exact bounds and the combinatorial complexity involved make this topic a fertile ground for new research and discoveries. Thus, investigating techniques and specific results for graph families can open new possibilities and reveal interesting structural properties in this domain.

\vspace{0.5cm}
\noindent\textbf{Application of List coloring.} The List coloring problem finds a natural application in processor assignment for operations in branching control flow graphs, as discussed by \cite{jansen1997generalized}.

In this setting, each operation is represented as a vertex in a graph, and available processors are seen as colors. However, not every processor can execute every operation. For each operation $v$, a list $S_v$ specifies which processors (colors) are capable of executing it, this defines the list of admissible colors for each vertex.

The graph's edges encode \textbf{incompatibilities}: two operations connected by an edge cannot be executed on the same processor. These incompatibilities typically arise from:
\begin{itemize}
    \item \textbf{Logical conflicts}, when operations belong to different branches of the control flow (modeled as a \textit{cograph}).
    \item \textbf{Temporal conflicts}, when operations overlap in their execution intervals (modeled as an \textit{interval graph}).
\end{itemize}

Often, the overall incompatibility graph is the intersection of a cograph and an interval graph. The objective is to find an assignment function $f: V \rightarrow S$ that satisfies:
\begin{enumerate}
    \item No two incompatible operations share the same processor ($f(v) \neq f(w)$ whenever $\{v,w\} \in E$).
    \item Each operation is assigned a processor from its list of allowed processors ($f(v) \in S_v$).
\end{enumerate}

This generalized List coloring model allows us to address realistic scheduling problems where both resource constraints and conflict relations must be respected. The approach from Jansen and Scheffler provides a theoretical foundation for solving such problems efficiently when the underlying conflict graphs have tree-like structure or special properties. An illustrative example of this application is shown in Figure~\ref{fig:applications-list-coloring}.

\begin{figure}[H]
    \centering
    \includegraphics[width=0.5\textwidth]{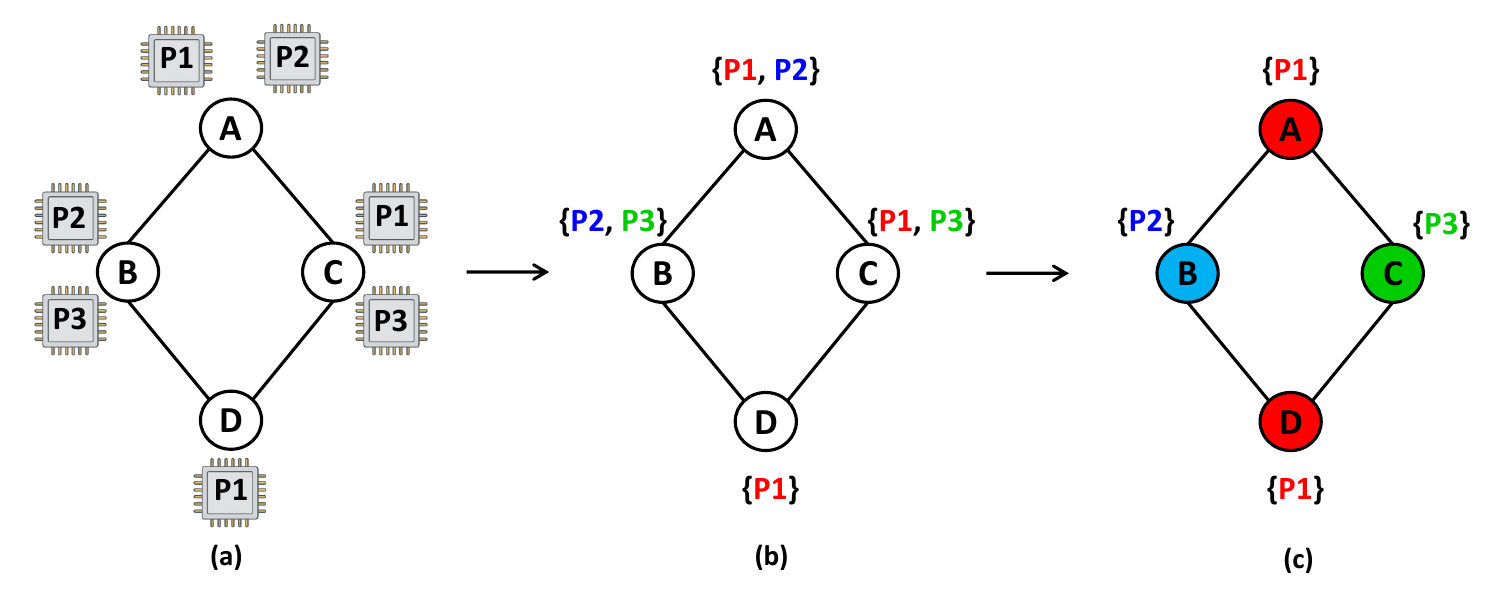}
    \caption{Example of list coloring applied to processor assignment. Each operation (vertex) has a list of allowed processors (colors), and the graph encodes execution conflicts. The final coloring respects both conflict constraints and individual compatibility restrictions.}
    \label{fig:applications-list-coloring}
\end{figure}

The Figure~\ref{fig:applications-list-coloring} illustrates the List Coloring problem (\textsc{LiCol}) as described by~\cite{jansen1997generalized}, in the context of processor assignment. Subfigure~(a) shows a set of operations represented as vertices, along with the available processors. Each vertex is associated with a subset of processors that are capable of executing the corresponding operation, reflecting the admissible color sets $S_v \subseteq S$. Subfigure~(b) represents the incompatibility graph, where edges denote logical or temporal conflicts that prevent two operations from sharing the same processor. Finally, subfigure~(c) presents a valid assignment (i.e., coloring) where each operation is mapped to a compatible processor from its list, while respecting the incompatibility constraints. This visual example satisfies both conditions defined in the \textsc{LiCol} formulation: conflicting vertices receive different colors, and each color assignment lies within the allowed processor list of each operation.

\vspace{0.5cm}
\noindent\textbf{Application of $\mu$-coloring.} The \citet{bonomo2005between}  introduce the \textit{$\mu$-coloring} model as a natural extension of classical coloring for resource allocation problems involving incompatibilities and individual requirements. In this setting, users compete for exclusive resources that can be ranked by quality or capacity, and each user requires a resource meeting a personal minimum threshold.

The system is modeled as an undirected graph \( G = (V, E) \), where vertices represent users and edges represent conflicts prohibiting shared resource usage. Each vertex \( v \) is assigned an upper bound \( \mu(v) \), indicating the highest resource level (or color) acceptable for that user.

A valid $\mu$-coloring is a function \( f: V \rightarrow \mathbb{N} \) satisfying:
\[
\begin{aligned}
f(v) &\neq f(w), && \forall \{v,w\} \in E, \\
f(v) &\leq \mu(v), && \forall v \in V.
\end{aligned}
\]

This model captures allocation situations where users accept any sufficiently good resource rather than a fixed list, making $\mu$-coloring less restrictive than list coloring. It is particularly suitable for scenarios like equipment or service assignments, where resources can be ordered by performance.

\begin{figure}[H]
    \centering
    \includegraphics[width=0.5\textwidth]{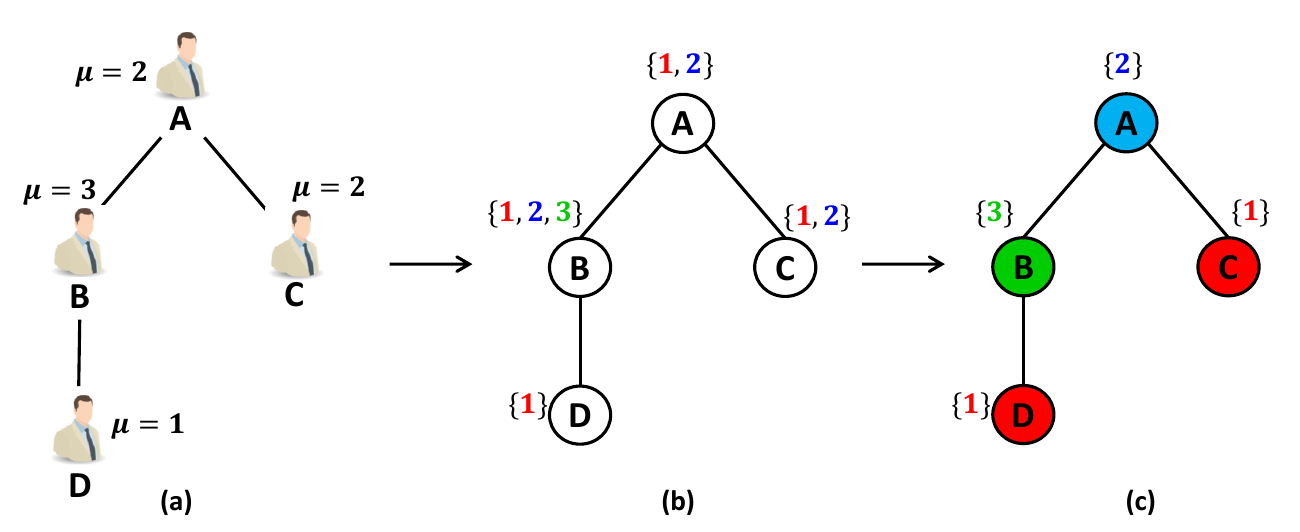}
    \caption{Illustration of the $\mu$-coloring model applied to a channel assignment scenario. Each vertex represents a client with a maximum acceptable channel index $\mu(v)$, and edges denote conflicts that require distinct channel assignments. The final coloring ensures that each client receives a suitable channel while avoiding interference with neighbors.}
    \label{fig:applications-mu-coloring}
\end{figure}

The Figure~\ref{fig:applications-mu-coloring} illustrates the $\mu$-coloring model as introduced by~\cite{bonomo2005between}, which extends classical vertex coloring to better represent resource allocation problems involving user-specific quality constraints. In this example, each vertex represents a corporate client requiring exclusive access to a communication channel, with an upper bound $\mu(v)$ indicating the highest channel index the client is willing to accept (lower values correspond to higher-quality resources). Subfigure~(a) shows the conflict graph among clients and their respective $\mu$ values. Subfigure~(b) presents the admissible color sets, defined as $\{1, 2, \ldots, \mu(v)\}$ for each vertex. Finally, subfigure~(c) displays a valid $\mu$-coloring where each client receives a channel that satisfies both its quality threshold and the exclusivity constraints imposed by the conflict graph. This concrete scenario matches the motivation described by Bonomo et al., where users can accept any resource \textit{good enough} making $\mu$-coloring a more natural and flexible model than list coloring in such settings.

The \citet{bonomo2005between} demonstrate that $\mu$-coloring lies between classical coloring and list coloring in both generality and complexity, and present polynomial-time algorithms for special graph classes such as cographs.

\vspace{0.5cm}
\noindent\textbf{Application of $(\gamma, \mu)$-coloring.} In embedded real-time systems, tasks must often be executed within specific time windows while avoiding resource conflicts. This scheduling scenario can be naturally modeled as a vertex coloring problem with interval constraints, precisely the setting captured by the $(\gamma, \mu)$-coloring model.

Each task is represented as a vertex in a graph, and an edge between two vertices indicates that the corresponding tasks cannot be executed simultaneously (e.g., due to sharing a non-preemptive resource such as a CPU core or communication bus). The execution time of a task is modeled as a color assigned to its vertex, subject to an interval constraint $f(v) \in [\gamma(v), \mu(v)]$, where $\gamma(v)$ and $\mu(v)$ represent the earliest and latest permissible execution times, respectively. The resulting function $f: V \rightarrow \mathbb{N}$ must be a proper coloring, such that $f(u) \ne f(v)$ for every edge $uv \in E$.

The Figure~\ref{fig:applications-gamma-mu-coloring} illustrates this modeling approach. Four tasks are shown, each with a light gray interval representing its feasible execution window $[\gamma(v), \mu(v)]$. The colored dot within each interval indicates the actual execution time $f(v)$ assigned to that task. Conflicts such as shared resources are modeled by ensuring that tasks A and B, as well as C and D, are scheduled at different times. This ensures a valid and conflict-free assignment of execution times within the system’s operational cycle.

This representation is both expressive and efficient for capturing key constraints in embedded system scheduling, especially when precise control over task timing and resource isolation is required.

\begin{figure}[H]
    \centering
    \includegraphics[width=0.5\textwidth]{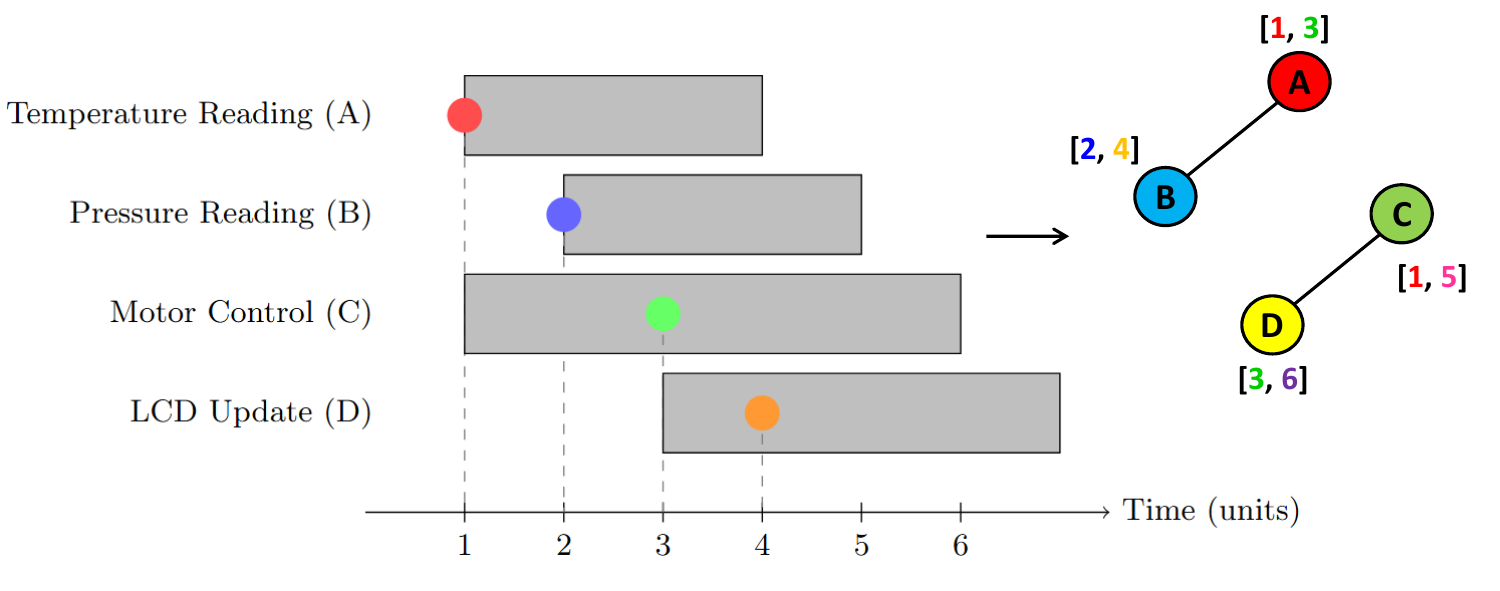}
    \caption{Modeling of a real-time task schedule as a $(\gamma, \mu)$-coloring problem. On the left, each horizontal bar represents the allowed execution window $[\gamma(v), \mu(v)]$ of a task, and the colored dot indicates the assigned time $f(v)$. On the right, each vertex in the conflict graph represents a task, with its color matching the scheduled time. Edges represent resource conflicts — i.e., tasks that cannot run at the same time. The interval $[\gamma(v), \mu(v)]$ for each task is also shown next to its corresponding vertex. This coloring ensures a valid and conflict-free schedule within the specified temporal constraints.}
    \label{fig:applications-gamma-mu-coloring}
\end{figure}

\vspace{0.5cm}
\noindent\textbf{Application of $k$-choosability.}
Consider the frequency assignment problem in a wireless network modeled by the path $P_3$ with vertices $A$--$B$--$C$. Each transmitter has a list of two available frequencies: $L(A)=\{100,102\}$, $L(B)=\{100,101,103\}$, and $L(C)=\{101,102\}$. The goal is to assign each transmitter a frequency from its list such that adjacent vertices use distinct frequencies. Choosing $103$ for $B$, we may assign $100$ to $A$ and $101$ to $C$, producing a valid interference-free assignment. More generally, since every vertex has a list of size~2, and $P_3$ is $2$-choosable, such a conflict-free assignment always exists regardless of the specific lists.

\paragraph{}
The computational complexity landscape of constrained coloring problems remains only partially 
understood, particularly regarding the relationships between models such as List Coloring, 
$\mu$-coloring, and $(\gamma,\mu)$-coloring. Although these variants are central to graph theory, 
it is still unclear when they are structurally equivalent, when their complexities diverge across 
graph classes, and under which conditions hardness or tractability can be transferred between them. 
This gap limits the development of a unified theoretical framework and complicates the reuse of 
existing algorithmic results.

\paragraph{}
At the same time, many real-world applications, including task scheduling with execution windows, 
resource allocation under bounded capacities, and frequency assignment, naturally require 
interval-based coloring constraints that specify lower and upper bounds for each vertex. 
However, $(\gamma,\mu)$-coloring remains computationally hard even in graph classes where 
classical coloring is easy, motivating the search for structured variants that reduce the 
combinatorial explosion inherent in arbitrary lists. These theoretical and practical needs justify 
the investigation of interval-restricted models such as $k$-$(\gamma,\mu)$-coloring and its 
choosability version, which aim to balance expressive power with improved algorithmic tractability.

\vspace{0.5cm}
\noindent\textbf{Our Results.} Building on the foundations established in Section~\ref{sec:preliminar}, where the classical 
framework of list assignments, $\mu$-coloring, and $(\gamma,\mu)$-coloring is presented, and on 
the contextual overview in Section~\ref{sec:trabalhos_relacionados}, this work develops a unified perspective 
on constrained graph coloring models. Our first main result, detailed in 
Section~\ref{sec:analogous-problems}, shows that \emph{List Coloring} and $(\gamma,\mu)$-coloring are 
analogous for graph classes closed under the pendant-vertex operator $\psi$. This structural 
equivalence implies that both problems share the same complexity behavior, allowing hardness and 
tractability results to transfer in both directions.

Section~\ref{sec:restricted-k-choosable} introduces the $k$-$(\gamma,\mu)$-coloring model, where each vertex is 
restricted to an interval of exactly $k$ consecutive colors. This structured restriction sharply 
reduces the combinatorial complexity of list assignments while preserving expressiveness. A key 
contribution is the demonstration that, although general $(\gamma,\mu)$-coloring remains 
NP-complete on bipartite, split, and interval graphs, its $k$-restricted version becomes 
polynomial-time solvable for fixed $k$ in all these classes, as summarized in Table~\ref{tab:coloring_complexity}. Moreover, 
a classical $k$-coloring can always be transformed efficiently into a $k$-$(\gamma,\mu)$-coloring, 
reinforcing the compatibility between the two models.

Finally, Section~\ref{subsubsec:k-gamma-mu-choosable} formalizes \textit{$k$-$(\gamma,\mu)$-choosability}, extending 
the interval framework to a choosability setting. Since every vertex receives an interval of size 
$k$, this model preserves the robustness of classical $k$-choosability while avoiding the 
exponential blowup of arbitrary lists, making it more suitable for applications involving bounded 
ranges or windowed resources.

Together, these contributions offer a clearer characterization of how structural constraints on 
color intervals influence the complexity and tractability of list-based coloring models.

\section{Preliminaries}\label{sec:preliminar}

Let $G=(V,E)$ be a simple graph, where $V$ is the set of vertices and $E$ is the set of edges. A graph $ G'=(V',E') $ is a subgraph of $ G $ if $ V' \subseteq V $ and $ E' \subseteq E $. A subgraph $ G'=(V',E') $ is an induced subgraph of $ G $ if $ E'= {uv:uv \in E \text{ and } u,v \in V'} $. We also say that $ G' $ is induced by $ V' $ and usually write $ G(V') $ for $ G' $. Let $ G $ be a graph and $ \{H_1,\ldots, H_p\} $ be a set of graphs. Then $ G $ is $ (H_1, . . . , H_p) $-free if $ G $ has no induced subgraph isomorphic to a graph in $ \{H_1,\ldots, H_p\} $.

The degree of a vertex in a graph is its number of incident edges. The degree of a graph $G$ (or its maximum degree) is the maximum of the degrees of its vertices, often denoted $\Delta(G)$. A sequence of vertices $ v_0, v_1, v_2,\ldots, v_l, v_0 $ is called a cycle of 	length $ l-1 $ (or closed path) if $ v_{i-1}v_i\in E $ for $ i=1,2,\ldots,l $ and $ v_lv_0\in E $. A graph $ G $ is bipartite if its vertices can be partitioned into two disjoint stable sets $ V = S_1 + S_2 $, i.e., every edge has one endpoint in $ S_1 $ and the other in $ S_2 $. An independent set, or a stable set, of $ G $ is a subset $ S $ of $ V $, where no two vertices of $ S $ are adjacent. We say that $ V' \subseteq V $ is a clique in $ G $ (or complete subgraph) if for all $ u,v\in V' $, $ u \neq v $, $ uv \in E $. We can define a clique number (or maximum clique) by $ \omega(G) $.

For a graph $G=(V,E)$, an assignment $ c:V \rightarrow N $ is a coloring of $ G $. Furthermore, this coloring is proper if $ c(u)\neq c(v) $ for all $ uv\in E $, that is, a \textit{$k$-coloring} of $G$ is an assignment of colors to the vertices of $G$ such that no two adjacent vertices share the same color. The chromatic number $\chi_G$ of a graph is the minimum value of $k$ for which $G$ is $k$-colorable.

A $ k $-coloring naturally induces a partition of $ V $ into $ k $ color classes such that the members of each class are assigned the same color, i.e., they are pairwise non-adjacent. Therefore, a partition of $ V $ into $ k $ independent sets is equivalent to a $ k $-coloring of $ G $. Since all vertices of a clique number $ G $ are pairwise adjacent, it is immediate to see that they must have different colors in any coloring of $ G $, then $ \chi(G)\geq\omega(G) $ \citep{korman1979graph}. There are other bounds in $ \chi(G) $ involving clique, for example, if $ G $ is not a clique or an odd cycle, then $ \chi(G)\leq\Delta $ \citep{brooks1941colouring,lovasz1975three}.

The classic graph coloring problem, which consists in finding the chromatic number of a graph, is one of the most important combinatorial optimization problems and it is known to be NP-complete \citep{karp1972reducibility}. There are several versions of this classic vertex coloring problem, involving additional constraints on both the edges and the vertices of the graph. One of them is the List coloring problem, where given a graph $G$ there is an associated set $L(v)$ of allowed color lists for each vertex $v \in V(G)$. If it is possible to get a proper coloring of $G$ with these color lists, then we have a list coloring of $G$. A \textit{list-assignment} $L$ to the vertices of $G$ is the assignment of a list (set) $L(v)$ of colors to every vertex $v$ of $G$ (\textbf{Figure}~\ref{fig:list_coloring}).

\begin{defi}[\citep{erdos:1979}]
	If it is possible to get a proper coloring of $G$ with set $L(v)$ of colors, then we have a List coloring of $G$.
\end{defi}

\begin{figure}[h]
\centering
\includegraphics[width=0.5\textwidth]{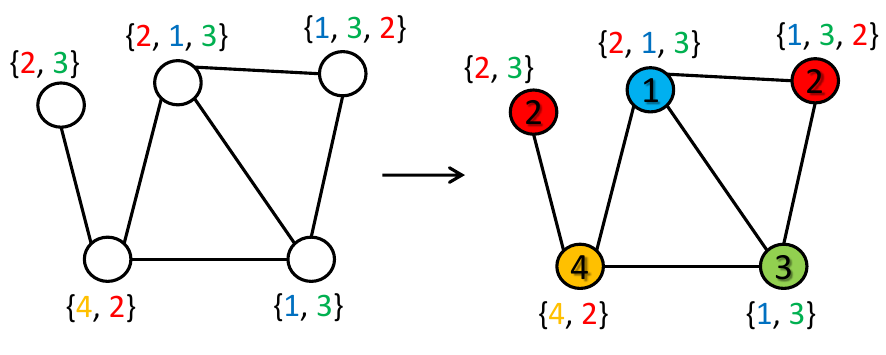}

\caption{Example of a list coloring on a graph. On the left, each vertex is associated with a list of allowed colors $L(v)$, shown next to the vertex. On the right, a proper list coloring is shown, where each vertex $v$ is assigned a color $f(v) \in L(v)$ such that no two adjacent vertices share the same color. The assigned colors are consistent with the lists and respect the coloring constraint for adjacent vertices.}
\label{fig:list_coloring}
\end{figure}

List coloring generalizes proper coloring by allowing each vertex to have its own specific list of permitted colors, from which a color must be chosen such that adjacent vertices receive different colors. From this model, important variations arise, such as Precoloring, $\mu$-coloring, and $(\gamma, \mu)$-coloring.

Precoloring extension \cite{biro1992precoloring} can be seen as a special case of List coloring, which refers to the problem of coloring a graph where no vertices have been precolored, but each vertex is assigned a list of available colors. To reduce a precoloring extension problem to a list coloring problem, assign each uncolored vertex a list consisting of the colors not yet used by its initially precolored neighbors, and then remove the precolored vertices from the graph. In other words, a prespecified vertex subset is colored beforehand, and the goal is to extend this partial proper coloring to a proper $k$-coloring of the whole graph. This type of constraint, where some vertices have restricted coloring options due to prior assignments and illustrates a broader class of coloring problems that incorporate individual limitations on vertex colors.

Among the most studied of these constrained coloring variants is the $\mu$-coloring. In this version, each vertex $v$ in the graph $G$ is associated with a natural number $\mu(v)$, which defines the maximum allowable color that can be assigned to it. A proper $\mu$-coloring is a coloring function $f$ such that $f(v) \leq \mu(v)$ for every $v \in V(G)$. This constraint can model situations where certain vertices are limited in their color choices due to physical, resource, or priority limitations. 

\begin{defi}[\citep{bonomo2005between}]
	Given a graph $G$ and a function $\mu:V(G)\rightarrow \mathbb{N}$, $ G $ is $\mu$-colorable if there exists a coloring $f$ of $G$ such that $f(v)\leq\mu(v)$ for every $v\in V(G)$.
\end{defi}

An even more general and flexible version is the $(\gamma, \mu)$-coloring, where each vertex $v$ is associated with both a lower bound $\gamma(v)$ and an upper bound $\mu(v)$ for its color. Formally, given two functions $\gamma, \mu : V(G) \rightarrow \mathbb{N}$ such that $\gamma(v) \leq \mu(v)$ for all $v \in V(G)$, we say that the graph $G$ is $(\gamma, \mu)$-colorable if there exists a proper coloring $f : V(G) \rightarrow \mathbb{N}$ satisfying $\gamma(v) \leq f(v) \leq \mu(v)$ for every vertex $v$. This model is useful in scenarios where a vertex not only has a restriction on how large the assigned value can be, but also a minimum threshold it must respect. As such, $(\gamma, \mu)$-coloring generalizes both the classical list coloring and the $\mu$-coloring, and is widely studied in the context of constraint satisfaction, resource allocation, and scheduling problems. An illustrative example of a $(\gamma, \mu)$-coloring, including both the assigned color $c(v)$ and the corresponding bounds $(\gamma(v), \mu(v))$ for each vertex, is shown in \textbf{Figure~\ref{fig:gamma-mu-coloring}}.

\begin{defi}[\citep{bonomo2009exploring}]
	Give a graph $G$ and a function $\gamma,\mu:V(G)\rightarrow \mathbb{N}$ such that $\gamma(v)\leq\mu(v)$ for every $v\in V(G)$, $G$ is $(\gamma,\mu)$-colorable if there exists a coloring $f$ of $G$  such that $\gamma(v)\leq f(v)\leq\mu(v)$ for every $v\in V(G)$.
\end{defi}

\begin{figure}[h!]
\centering
\includegraphics[width=0.5\textwidth]{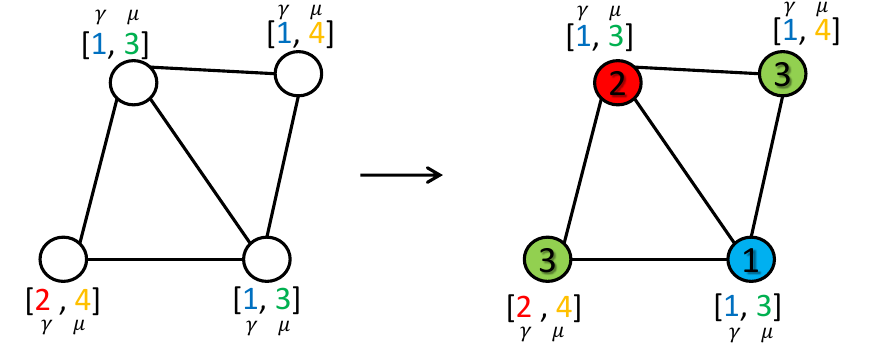}
\caption{An example of a $(\gamma, \mu)$-coloring of a graph. On the left, each vertex is associated with an interval $[\gamma(v), \mu(v)]$ indicating the minimum and maximum allowed colors for that vertex. On the right, a valid coloring is presented in which each vertex $v$ is assigned a color $f(v)$ such that $\gamma(v) \leq f(v) \leq \mu(v)$ and adjacent vertices receive different colors. The vertex colors are shown numerically and visually.}

\label{fig:gamma-mu-coloring}
\end{figure}

In addition to the upper bounding function $\mu:V(G) \rightarrow \mathbb{N}$, this generalization of the $\mu$-coloring problem also incorporates a function $\gamma:V(G) \rightarrow \mathbb{N}$, which defines lower bounds for the colors assigned to the vertices of the graph $G$. Thus, the coloring must simultaneously satisfy both lower and upper constraints at each vertex. This model represents a natural extension of the precoloring extension problem, encompassing it as a particular case and broadening its applicability to contexts involving more complex constraints.

\subsection{$k$-choosability in graphs}

The problem of the \textit{$k$-choosability} is a fundamental property within the framework of List coloring, a model that generalizes classical proper vertex coloring. Instead of sharing a common color set, each vertex \( v \) in a list coloring is assigned its own list of allowed colors, denoted by \( L(v) \). A list coloring of a graph \( G \) is a function that assigns to each vertex a color from its list such that adjacent vertices receive different colors, that is, the coloring remains proper. A graph is said to be \textit{$k$-choosable} if, for every possible assignment of lists of size \( k \) to the vertices, there exists a proper list coloring. The smallest such \( k \) is known as the graph’s \textit{choice number}. 

\begin{defi}[\citep{chartrand2019chromatic}]
A graph $G$ is \emph{$k$-choosable} if it is $\mathcal{L}$-list-colorable for every collection $\mathcal{L}$ of lists $L(v)$ assigned to the vertices of $G$, such that $|L(v)| \leq k$ for every $v \in V(G)$.
\end{defi}

The concept of $k$-choosability was independently introduced by Erdős and by Vizing, who conjectured that every graph satisfies \( \chi_\ell(G) \leq \Delta(G) + 1 \), where \( \chi_\ell(G) \) is the choice number and \( \Delta(G) \) is the maximum degree of \( G \).

Thus, a graph is said to be $k$-choosable (or to have choice number $k$) if it has a proper List coloring for every possible assignment of lists of $k$ colors to each vertex. The choice number of a graph $G$, denoted by $\chi_{\ell}(G)$, is the least number $k$ such that $G$ is $k$-choosable. It is easy to see that the chromatic number $\chi(G)$ is less than or equal to the choice number $\chi_{\ell}(G)$ (i.e., the smallest $k$ such that $G$ is $k$-choosable).

\begin{defi}[\citep{chartrand2019chromatic}]
The \emph{list chromatic number} (also known as the \emph{choice number}) $\chi_\ell(G)$ of a graph $G$ is the smallest positive integer $k$ such that $G$ is $k$-choosable.
\end{defi}

The Figure~\ref{fig:P3-choosable} shows that the path graph \( P_3 \) is $2$-choosable. Regardless of how lists of size 2 are assigned to its vertices, a proper list coloring is always possible.

\begin{figure}[h!]
\centering
\includegraphics[width=0.5\textwidth]{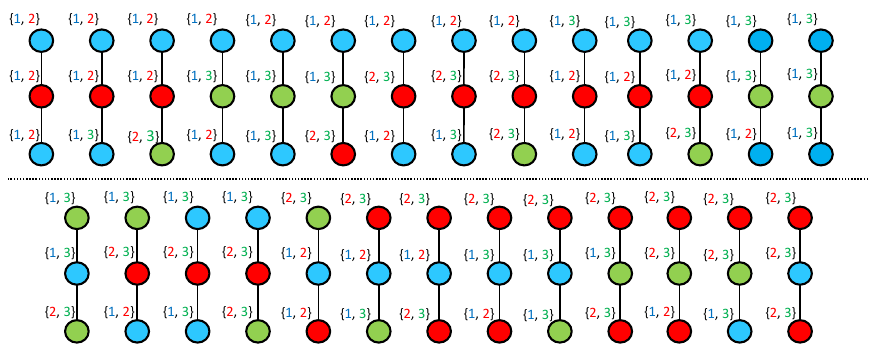}
\caption{Exhaustive verification that the path graph $P_3$ is $2$-choosable. Each subfigure corresponds to one of the $3^3 = 27$ possible list assignments where every vertex receives a list of two colors selected from the set $\{1,2,3\}$. For each assignment, a valid list coloring is shown, demonstrating that a proper coloring can always be selected from the assigned lists. The figure thus confirms that $P_3$ is indeed $2$-choosable.}

\label{fig:P3-choosable}
\end{figure}

To show that the path graph \( P_3 \) is 2-choosable, we consider all possible assignments of lists of size 2 to its three vertices. Using the color set \(\{1,2,3\}\), which generates all distinct 2-element lists (\(\{1,2\}, \{1,3\}, \{2,3\}\)), there are \(3^3 = 27\) total combinations. For each assignment, it is always possible to choose a color for each vertex from its list such that adjacent vertices receive different colors. This can be verified by case analysis or algorithmically, confirming that \( P_3 \) is indeed 2-choosable.

\subsection{Parameterized Complexity}

A parameterized problem $\Pi$ is informally defined by an instance, a parameter, and a question. 
A parameterized problem $\Pi(k)$ is \textit{fixed-parameter tractable} ($FPT$) if it can be solved 
in time $f(k)\cdot n^{c}$, where $n$ is the size of the input, $c$ is a constant, and $f$ is a 
computable function. The class $XP$ contains all problems solvable in time $\|I\|^{f(p)}$. 
Parameterized problems are further organized into the $W$-hierarchy: a problem belongs to $W[t]$ 
if it can be reduced, in $FPT$-time, to a weight-$k$ satisfiability problem on circuits of 
weft~$t$. By definition, $FPT \subseteq XP$, and the hierarchy extends through several 
intermediate classes:

\begin{equation}\label{Eq:ParameteComple}
FPT = W[0] \subseteq W[1] \subseteq W[2] \subseteq \ldots \subseteq W[P] \subseteq XP.
\end{equation}

Downey and Fellows pioneered the development of parameterized complexity theory, which refines 
the analysis of NP-complete problems by incorporating a numerical parameter $k$, not necessarily 
dependent on input size. Comprehensive references on the topic include 
\cite{abrahamson1989complexity,cai1997fixed,downey1998parameterized,Downey1999ParameterizedC,niedermeier2002invitation}.

One of the earliest results connecting parameterized complexity and graph coloring is due to 
Arnborg and Proskurowski~\cite{arnborg1989linear}, who proved that classical coloring is $FPT$ 
when parameterized by the treewidth. Surveys and advances on parameterized coloring can be found 
in \cite{arnborg1989linear,golovach2017survey,couturier2012parameterized}. Regarding 
\textsc{List Coloring}, Jansen and Scheffler~\cite{jansen1997generalized} showed that the problem 
is $FPT$ on $P_4$-free graphs (cographs) when parameterized by the size $k$ of the list 
assignment. Further results by Couturier et al.~\cite{couturier2012parameterized} show that 
\textsc{List Coloring} is $FPT$ on $(rP_1+P_2)$-free graphs with parameter $k+r$, and also 
$FPT$ on $(P_1+P_3)$-free graphs when parameterized solely by $k$.

The parameterized hardness of \textsc{List Coloring} is well established. Fellows et al.\ 
\cite{fellows2011complexity} proved that it is $W[1]$-hard when parameterized by treewidth, via a 
parameter-preserving reduction from \textsc{Multicolor Clique}, which is $W[1]$-hard 
\cite{fellows2007fixed}. Hardness persists even under structural restrictions: Fiala et al.\ 
\cite{fiala2011parameterized} showed that \textsc{List Coloring} remains $W[1]$-hard on 
split graphs, whose vertices can be partitioned into a clique and an independent set. These 
results collectively demonstrate that the parameterized intractability of \textsc{List Coloring} 
is robust across various restricted graph classes.

\section{Related Work}\label{sec:trabalhos_relacionados}

The main results in the literature on List coloring of graphs include the characterization of graph classes where list coloring behaves similarly to traditional coloring, such as trees and odd-length cycles. A significant breakthrough was Alon's \citep{alon1993restricted, alon2000degrees} use of the probabilistic method to establish upper bounds on the number of colors required to ensure a valid list coloring. Additionally, structural results show that certain graphs, such as planar graphs, admit list colorings with a controlled number of colors, contributing to the understanding of the differences between classical coloring and List coloring.

List coloring in some graph classes can be found in polynomial time, as is the case for tree graphs \citep{jansen1997generalized}, complete graphs~\citep{song2013graph}, and block graphs~\citep{bonomo2009exploring}.

 The List coloring for some graph classes is $ NP $-Complete, such as bipartite~\citep{kubale1992some}, complete bipartite~\citep{jansen1997generalized}, complement of a bipartite~\citep{jansen1997optimum}, cographs~\citep{jansen1997generalized}, distance-hereditary~\citep{jansen1997generalized}, split~\citep{bonomo2009exploring}, complete split~\citep{jansen1997generalized}, chordal~\citep{song2013graph}, interval~\citep{song2013graph,bonomo2009exploring} and line of $ K_n $~\citep{kubale1992some}.

It can be observed that the vertex coloring problem is a special case of $\mu$-coloring and the Precoloring extension, both of which are, in turn, special cases of $(\gamma, \mu)$-coloring. Moreover, $(\gamma, \mu)$-coloring is itself a particular case of List coloring. These observations imply that all problems within this hierarchy are polynomially solvable in those graph classes where list coloring is polynomial. Conversely, all of these problems are NP-complete in the graph classes where vertex coloring is NP-complete \cite{bonomo2009exploring}.

The \textbf{Table~\ref{tab:coloring-complexity}} summarizes the computational complexity of various vertex coloring problems including $\mu$-coloring, $(\gamma, \mu)$-coloring, and List coloring, across several graph classes. A key observation is the hierarchical relationship among these problems: vertex coloring is a special case of Precoloring, which in turn is a special case of $\mu$-coloring, itself a subset of $(\gamma, \mu)$-coloring, and all are particular instances of list coloring.

For general graphs, all these coloring problems are NP-complete (NP-C, for short), confirming the expected computational hardness. However, for structured graph classes such as trees, cographs, and interval graphs, the problems become polynomial-time solvable (P, for short), highlighting the influence of graph topology on complexity. For instance, all problems are solvable in polynomial time on trees and cographs, but even small increases in structural complexity (e.g., clique-tree height) can cause a shift to NP-completeness.

Another notable trend is the consistent NP-completeness of List coloring in most graph classes, aligning with its status as a generalization of other coloring problems. The table also illustrates cases where lower-complexity problems remain polynomial, while more general ones become intractable, emphasizing the nuanced transitions within the hierarchy.

\begin{table*}[ht]
\caption{
Computational complexity results for different vertex coloring problems across various 
graph classes. The columns correspond to $\mu$-coloring, $(\gamma, \mu)$-coloring, 
and List Coloring problems, while the rows represent distinct graph classes. 
Entries marked with ``P'' indicate polynomial-time solvability, ``NP-C'' denotes 
NP-completeness, and ``?'' indicates that the computational complexity remains open.
}
\centering\small
\label{tab:coloring-complexity}

\resizebox{\textwidth}{!}{%
\begin{tabular}{
|p{2.93cm}
|>{\centering\arraybackslash}p{4.3cm}
|>{\centering\arraybackslash}p{4.0cm}
|>{\centering\arraybackslash}p{4.5cm}|
}
\hline\hline
\textbf{Graph Class} & \textbf{$\mu$-coloring} & \textbf{$(\gamma,\mu)$-coloring} & \textbf{List coloring} \\ 
\hline

General                 & \footnotesize NP-C \citep{bonomo2005between} & \footnotesize NP-C \citep{bonomo2009exploring} & NP-C \citep{gravier1996coloration} \\
\hline

Block                   & P & P & P \citep{bonomo2009exploring} \\
\hline

Bipartite               & \footnotesize NP-C \citep{bonomo2005between} & NP-C \citep{bonomo2009exploring} & NP-C \citep{kubale1992some} \\
\hline

Clique-Tree height 1    & P \citep{bonomo2012coloring} & P \citep{bonomo2012coloring} & NP-C \\
\hline

Clique-Tree height 2    & P \citep{bonomo2012coloring} & NP-C \citep{bonomo2012coloring} & NP-C \\
\hline

Clique-Tree height 3    & NP-C \citep{bonomo2012coloring} & NP-C \citep{bonomo2012coloring} & NP-C \\
\hline

Cographs                & P \citep{bonomo2005between} & ? & NP-C \citep{jansen1997generalized} \\
\hline

Complete Bipartite      & P \citep{bonomo2009exploring} & P \citep{bonomo2009exploring} & NP-C \citep{jansen1997generalized} \\
\hline

Complete Split          & P \citep{bonomo2009exploring} & P \citep{bonomo2009exploring} & NP-C \citep{jansen1997generalized} \\
\hline

Distance Hereditary     & NP-C \citep{bonomo2009exploring} & NP-C \citep{bonomo2009exploring} & NP-C \citep{jansen1997generalized} \\
\hline

Interval                & NP-C \citep{bonomo2009exploring} & NP-C & NP-C \\
\hline

Line of $K_{n,n}$       & NP-C \citep{bonomo2009exploring} & NP-C & NP-C \\
\hline

Line of $K_n$           & NP-C \citep{bonomo2009exploring} & NP-C \citep{bonomo2009exploring} & NP-C \citep{kubale1992some} \\
\hline

Split                   & NP-C \citep{bonomo2009exploring} & NP-C \citep{bonomo2009exploring} & NP-C \\
\hline

Trees                   & P & P & P \citep{jansen1997generalized} \\
\hline

Unit Interval           & NP-C \citep{bonomo2012coloring} & NP-C & NP-C \\
\hline\hline
\end{tabular}
}
\end{table*}

The table summarizes the computational complexity of three vertex coloring problems, $\mu$-coloring, $(\gamma, \mu)$-coloring, and list coloring, across various graph classes. These classes exhibit well-established structural relationships that help explain the observed complexity patterns.

Many of the classes form a strict inclusion hierarchy, such as:
\begin{itemize}
    \item \textbf{Block graphs} $\subset$ \textbf{Chordal} $\cap$ \textbf{Distance-Hereditary}
  \item \textbf{Cographs} $\subset$ \textbf{Distance-Hereditary} $\subset$ \textbf{General} (\cite{ahn2022towards})
  \item \textbf{Complete Split} $\subset$ \textbf{Split} 
  \item \textbf{Clique-Tree height 1} $\subset$ height 2 $\subset$ height 3 (\cite{bonomo2012coloring})
  \item \textbf{Trees} $\subset$ \textbf{Block graphs}
  \item \textbf{Unit Interval} $\subset$ \textbf{Interval} (\cite{bonomo2009exploring})
\end{itemize}

As a general trend, the more structurally restricted the graph class, the more likely the coloring problems are solvable in polynomial time. For instance, all three problems are in P for \emph{trees}, \emph{block graphs}, and \emph{clique-tree graphs of height 1}. In contrast, as the structural constraints are relaxed (e.g., moving to height 2 or 3) the problems often become NP-complete.

An interesting exception is the case of \textbf{cographs}, for which:
\begin{itemize}
  \item $\mu$-coloring is in P,
  \item List coloring is NP-complete,
  \item The complexity of $(\gamma, \mu)$-coloring is still \textbf{open}.
\end{itemize}

List coloring tends to remain NP-complete even in restricted classes such as \emph{cographs}, \emph{interval}, \emph{split}, and \emph{bipartite} graphs. This is due to the stronger constraints imposed by the list assignment, which significantly increases problem hardness. Overall, the structural hierarchy not only organizes the graph classes but also serves as a useful framework to anticipate algorithmic tractability or intractability across different variants of coloring.

\subsection{Related work in choosablity in graphs}\label{subsec:work-choosability}

These complexity patterns also motivate further investigation into more general frameworks for graph coloring, among which the concept of \textit{choosability}, or list choosability, plays a central role. Choosability extends the traditional List coloring model by asking whether, for any assignment of lists of a given size to each vertex, there always exists a proper coloring from those lists.

Choosability provides a more robust framework for studying coloring under constraints, particularly in contexts where fixed color availability per vertex reflects real-world limitations, such as frequency assignment, scheduling, or resource distribution. While every $k$-colorable graph is trivially $k$-choosable for some cases, the converse does not hold in general, and determining choosability often involves deeper combinatorial arguments and complexity considerations.

This concept not only generalizes classical coloring but also connects deeply with structural graph theory, and its complexity varies significantly depending on the graph class under consideration. As such, choosability has become a central topic in the study of graph coloring and constraint satisfaction.

Choosability in graphs was introduced by Erdős \textit{et al.} in 1979~\citep{erdos:1979}, and independently by Vizing in 1976~\citep{vizing1976coloring}. Erdős \textit{et al.} characterized graphs that are 2-choosable, showed that bipartite graphs are 3-choosable, and conjectured that planar graphs are 5-choosable and that there exist planar graphs that are not 4-choosable. These conjectures were later proven by~\cite{thomassen1994every} and~\cite{voigt1993list}. 

Choosability theory has generated several extensions that impose additional structure on lists or restrict the admissible interactions between assigned colors. These generalized frameworks capture finer coloring requirements, encode specific structural constraints, and enforce combinatorial conditions absent from the classical model. They expose behaviors not exhibited by standard list coloring and substantially broaden the analytical scope of choosability. Below we outline some of the principal such extensions.

\begin{itemize}
    \item \textbf{(a : b)-choosability (or list multicoloring).}
    Introduced by~\cite{erdos:1979} and later developed by~\cite{gutner2009some}.  A graph $G$ is $(a:b)$-choosable if, for every list assignment $L$ with $|L(v)| = a$ 
    for each vertex $v$, there exists an assignment selecting exactly $b$ colors from 
    $L(v)$ for each vertex, such that adjacent vertices receive disjoint color sets.  
    Two fundamental results include:  
    (i) $(a:b)$-choosability is strictly stronger than $a/b$-colorability (Gutner–Tarsi);  
    (ii) $(a:b)$-choosability does not necessarily imply $(c:d)$-choosability even when 
    $c/d > a/b$ (Gutner–Tarsi).

    \item \textbf{Defective choosability.}
    Introduced by Cowen, Goddard, and Jesurum \citep{cowen1997defective} and later extended to list assignments by several authors.  
    A graph $G$ is $t$-defective $k$-choosable if, for every list assignment $L$ with $|L(v)| \ge k$ for each vertex $v$, there exists a list-coloring in which each color class induces a subgraph of maximum degree at most $t$.  
    Two representative results are:  
    (i) planar graphs of sufficiently large girth are $t$-defective $k$-choosable for small 
    values of $t$ and $k$ (Cowen–Goddard–Jesurum; Havet–Sereni);  
    (ii) for $t \geq 1$, many graph classes that are not $k$-choosable become 
    $t$-defective $k$-choosable (Havet–Sereni).

    \item \textbf{DP-choosability.}
    Introduced by \cite{dvorak2015correspondence}.  A graph $G$ is DP-$k$-choosable if, for every correspondence-cover (DP-cover) 
    $(L,H)$ with $|L(v)| = k$ for each vertex, there exists an independent transversal 
    of $H$ selecting one element from each list $L(v)$.  
    Key results include:  
    (i) DP-choosability strictly generalizes classical list choosability (Dvořák–Postle);  
    (ii) every planar graph is DP-5-choosable, strengthening Thomassen's 5-choosability 
    theorem (Dvořák–Postle; Bernshteyn).

    \item \textbf{Online choosability (or paintability).}
    Introduced by \cite{schauz2009mr} and based on earlier work by \cite{zhu2009line} on online list coloring.  
    A graph $G$ is online $k$-choosable if the colorer has a winning strategy in the adversarial list-coloring game against any sequence of list revelations of size at least $k$.  
    Two central results are:  
    (i) the online choice number is always at least the classical choice number (Schauz);  
    (ii) the online and offline choice numbers differ for several classes of graphs, 
    including bipartite and planar families (Schauz; Zhu).
\end{itemize}

To contextualize the position of our restricted form of 
$k$-choosability within the broader landscape of list-coloring theory, we summarize below the computational complexity of several major generalizations of choosability. These variants illustrate how different structural or algorithmic constraints affect the difficulty of deciding whether a graph satisfies a given list-coloring property. Table~\ref{tab:choosability-generalizations} provides a concise comparison of the main frameworks, their known complexity status, and representative references.

In all four generalizations considered, the structural constraints imposed on the 
lists differ fundamentally from those in interval-based models such as 
$(\gamma,\mu)$-coloring. In $(a:b)$-choosability, each vertex receives an arbitrary 
list of fixed size $a$, and the generalization arises not from additional structure 
on the lists but from requiring the selection of $b$ colors per vertex. In defective 
choosability, the lists remain completely arbitrary, with the restriction placed 
instead on the maximum defect allowed within each color class. In DP-choosability, 
the lists also have no internal structure; however, the interaction between lists 
is governed by an auxiliary correspondence graph that specifies arbitrary matchings 
between colors of adjacent vertices. Online choosability likewise employs arbitrary 
lists, but they are revealed adversarially and incrementally during an online 
coloring game. Thus, unlike interval-based frameworks, these generalizations 
preserve full arbitrariness in the formation of $L(v)$, shifting the complexity 
to constraints on color interactions or on the dynamics of the coloring process.

\begin{table*}[ht]
\centering
\caption{
Computational complexity landscape for major extensions of graph choosability. 
Each generalization imposes additional structural or algorithmic constraints on list assignments, 
resulting in decision problems of the form “does $G$ satisfy the corresponding list-coloring 
property for fixed parameters?”. The table summarizes known worst-case complexity classifications, 
highlighting that most variants lie at the second level of the polynomial hierarchy 
($\Pi_2^P$-complete), while online choosability naturally fits into the PSPACE framework and 
remains without a complete hardness characterization.
}
\label{tab:choosability-generalizations}

\begin{tabular}{|p{2.8cm}|p{4.2cm}|p{5.5cm}|p{3cm}|}
\hline\hline
\textbf{Generalization} & 
\textbf{Computational Complexity} &
\textbf{Brief Explanation} &
\textbf{References} \\ 
\hline

(a:b)-choosability & 
$\Pi_2^P$-complete (for fixed $a,b$) &
Decision requires verifying that \emph{every} list assignment of size $a$ admits a $b$-set coloring; includes $k$-choosability as special case. &
\cite{erdos:1979}; \cite{gutner2009some} \\
\hline

Defective $k$-choosability &
$\Pi_2^P$-complete (for fixed $k,d$) &
Generalizes list coloring by allowing each color class to have bounded defect; contains classical choosability when $d=0$. &
\cite{cowen1997defective}; \cite{havet2009improper} \\
\hline

DP-choosability (correspondence choosability) &
$\Pi_2^P$-complete (choosability); 
NP-complete (DP-$k$-colorability for fixed cover) &
Strict generalization of list coloring with arbitrary correspondences between lists; testing DP-$k$-colorability generalizes testing $k$-colorability. &
\cite{dvorak2015correspondence}; \cite{bernshteyn2017dp} \\
\hline

Online choosability (paintability) &
In PSPACE (decision problem);
no full complexity classification known &
Defined via an adversarial game; choice number is a lower bound for the paint number; complexity classification largely open. &
\cite{schauz2009mr}; \cite{zhu2009line} \\
\hline\hline
\end{tabular}

\end{table*}

\section{A note about Analogous Problems in List Coloring}\label{sec:analogous-problems}

The notion of analogous problems was introduced by Fellows et al.~\cite{fellows2015tractability} 
as a framework for comparing the structural complexity of distinct computational problems. 
Among their results, they proved that the Shortest Common Supersequence (SCS) problem and its 
restricted version RSCS are analogous, illustrating the applicability of the framework beyond 
graph-theoretic settings. The work~\cite{gama2019aspects} presented several observations 
indicating that certain specific variants of List Coloring exhibit strong structural similarities. 

Building on these insights, we now develop a set of results that not only formalize such 
relationships but also refine and extend them, providing strengthened versions of previously 
known correspondences among coloring problems. To state these results rigorously, we first recall 
the definitions of analogous and $p$-analogous problems, which serve as the formal basis for 
transferring complexity properties across different coloring formulations.

Denote by $Y(\Pi)$ the set of all instances $I$ of $\Pi$ yielding a yes-answer for the question ``$I\in Y(\Pi)?$''. The notion of {\em analogous problems} was introduced by \cite{fellows2015tractability}. Two optimization problems $\Pi$ and $\Pi'$ are said to be analogous if there exist linear-time reductions $f,g$ such that:

\begin{itemize}
    \item $\Pi \propto^{f} \Pi'$ and $\Pi' \propto^{g} \Pi$;
    \item every feasible solution $s$ for an instance $I$ of $\Pi$ implies a feasible solution $s'$ for $f(I)$ such that $size(s) = size(s')$;
    \item every feasible solution $s'$ for an instance $I'$ of $\Pi'$ implies a feasible solution $s$ for $g(I')$ such that $size(s') = size(s)$.
\end{itemize}

The definition above formalizes when two optimization problems exhibit equivalent solution structures under linear-time reductions that preserve solution size. However, for parameterized complexity, one must also control how parameters behave under such reductions. For this reason,~\cite{fellows2015tractability} introduce a refined notion that requires, in addition to structural equivalence between instances, a linear correspondence between the associated parameters, ensuring that parameterized complexity results can be rigorously transferred. The formal definition is given below.

\begin{defi}[\cite{fellows2015tractability}]
    Let $\Pi$ and $\Pi'$ be analogous decision problems. The parameterized problems $\Pi(k_1, \ldots,k_t)$ and $\Pi'(k'_1, \ldots,k'_t)$ are said to be $p$-analogous if there exist linear-time reductions $f,g$ and a one-to-one correspondence $k_i \longleftrightarrow k'_i$ such that:

    \begin{enumerate}
        \item $\Pi(k_1, \ldots,k_t) \propto^{f} \Pi'(k'_1, \ldots,k'_t)$ and $\Pi'(k'_1, \ldots,k'_t) \propto^{g} \Pi(k_1, \ldots,k_t)$;

        \item every easily checkable certificate $\mathcal{C}$ for the yes-answer of the question ``$ I\in Y(\Pi(k_1, \ldots,k_t))$?'' implies an easily checkable certificate $\mathcal{C}'$ for the yes-answer of the question ``$ f(I)\in Y(\Pi'(k'_1, \ldots,k'_t))$ such that $k'_i=\varphi'_i(k_i)$  for some linear function $\varphi'_i(1\leq i \leq t)$.

        \item every easily checkable certificate $\mathcal{C}'$ for the yes-answer of the question ``$ I'\in Y(\Pi'(k'_1, \ldots,k'_t))$?'' implies an easily checkable certificate $\mathcal{C}$ for the yes-answer of the question ``$ g(I')\in Y(\Pi(k_1, \ldots,k_t))$ such that $k_i=\varphi_i(k'_i)$  for some linear function $\varphi_i(1\leq i \leq t)$.
    \end{enumerate}

\end{defi}

Before presenting Lemma~\ref{lemma:analogo-main}, it is worth noting that its proof has been fully rewritten to provide a clearer and more rigorous interpretation of the notion of analogicity introduced by Fellows et al. In particular, establishing that two problems are analogous cannot rely on a one-directional transformation alone; it requires demonstrating a complete correspondence between feasible solutions of both formulations. This necessarily involves proving both directions of the argument, showing that every feasible solution of the original instance maps to a feasible solution of the transformed instance, and vice versa.

The “forward-and-backward’’ structure of the proof is therefore not merely stylistic, but intrinsic to the formal definition of analogous problems, as it ensures the preservation of feasibility, solution size, and structural behavior under linear-time reductions. In the rewritten version that follows, this principle is made explicit, yielding a more transparent and theoretically faithful presentation of the result.

\begin{lemma}[\cite{gama2019aspects}]
Let $\mathcal{C}$ be a class of graphs closed under the operator $\psi$, that is, 
for every $G \in \mathcal{C}$ we have $\psi(G) \in \mathcal{C}$, where $\psi$ is the 
construction that, from an instance $(G,L)$ of \textit{List Coloring}, produces a 
graph $\psi(G)$ by adding, for each vertex $v$ and each color 
$i \notin L(v)$, a pendant vertex $w_i$ adjacent to $v$ with 
$\gamma(w_i) = \mu(w_i) = i$, and setting $\gamma(v)=1$, $\mu(v)=c$ for all 
$v \in V(G)$, where $c$ is the largest color appearing in the lists.

Then, when restricted to the class $\mathcal{C}$, the problems 
\textit{List Coloring} and $(\gamma,\mu)$-\textit{coloring} are analogous in the 
sense of Fellows et al.
\end{lemma}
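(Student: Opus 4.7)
The plan is to establish, within the definition of analogous problems from Fellows et al., the two ingredients required: a pair of linear-time reductions $f$ and $g$ between \textsc{List Coloring} and $(\gamma,\mu)$-coloring restricted to the class $\mathcal{C}$, together with explicit correspondences between feasible solutions that preserve size. The reduction $f$ is precisely the construction $\psi$ described in the statement: given an instance $(G,L)$ of \textsc{List Coloring} with $G \in \mathcal{C}$, let $c$ denote the largest color occurring in any $L(v)$ and output the instance $(\psi(G),\gamma,\mu)$. The reverse reduction $g$ sends $(G,\gamma,\mu)$ to $(G,L)$ with $L(v)=\{\gamma(v),\gamma(v)+1,\ldots,\mu(v)\}$, leaving the underlying graph untouched. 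The closure of $\mathcal{C}$ under $\psi$, together with the trivial fact that $g$ does not alter $G$, guarantees that both reductions stay inside the class. Since each vertex of $G$ spawns at most $c$ pendants, $\psi(G)$ has at most $|V(G)|(1+c)$ vertices and $f$ runs in linear time in the input encoding; $g$ is clearly linear as well.

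For $f$, the forward direction of the solution correspondence extends any proper list coloring $\varphi$ of $(G,L)$ by setting $\varphi(w_i)=i$ for every pendant $w_i$ attached to a vertex $v$. The pendant constraint $\gamma(w_i)=\mu(w_i)=i$ is trivially met; the bounds $\gamma(v)=1 \leq \varphi(v) \leq c = \mu(v)$ hold since $\varphi(v) \in L(v) \subseteq \{1,\ldots,c\}$; and properness at the edge $vw_i$ follows from $\varphi(v)\in L(v)$ combined with $i\notin L(v)$. The backward direction is the key structural step: given any $(\gamma,\mu)$-coloring $\varphi'$ of $\psi(G)$, its restriction to $V(G)$ is already proper on the original edges and takes values in $\{1,\ldots,c\}$. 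If some $v$ were assigned a color $i\notin L(v)$, the adjacent pendant $w_i$ (which must itself receive color $i$ by its own interval constraint) would create a monochromatic edge, contradicting properness. Hence $\varphi'(v)\in L(v)$ for every $v\in V(G)$, so the restriction is a valid list coloring, closing the bijective correspondence for $f$.

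For $g$, the correspondence is a direct bijection: $\varphi(v)\in L(v)$ if and only if $\gamma(v)\leq \varphi(v)\leq \mu(v)$, so feasibility transfers without alteration in both directions and the solution size on $V(G)$ is preserved. The main obstacle is conceptual rather than technical: one must make explicit, as the definition of analogy demands, that the argument is genuinely symmetric and not merely a one-directional polynomial reduction. The subtle point lies in the backward direction for $f$, where a $(\gamma,\mu)$-coloring of $\psi(G)$ a priori permits $v$ to take any value in $\{1,\ldots,c\}$, and only the forced-colored pendants introduced by $\psi$ forbid colors outside $L(v)$. This is precisely where the closure of $\mathcal{C}$ under $\psi$ becomes essential, since without it the backward argument would escape the restricted class and the equivalence would collapse. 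Once this observation is in place, size preservation and linearity of both reductions follow directly from the construction.
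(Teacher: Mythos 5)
Your proof is correct and follows essentially the same route as the paper's: the forward reduction is $\psi$ with the pendant vertices forcing $\varphi'(v)\in L(v)$ in the backward solution correspondence, and the reverse reduction converts intervals $[\gamma(v),\mu(v)]$ into lists on the unchanged graph. Both arguments establish the two-directional solution correspondence and linearity in the same way, so there is nothing to flag.
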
\label{lemma:analogo-main}

\begin{proof}
Let $(G,L)$ be an instance of \textit{List Coloring} with $G \in \mathcal{C}$. 
By definition of $\psi$, we construct $\psi(G)$ as follows. Let $c$ be the highest 
color appearing in the lists of $L$. For each $v \in V(G)$, set $\gamma(v)=1$ and 
$\mu(v)=c$ in $\psi(G)$. For each color $i \in \{1,\dots,c\} \setminus L(v)$, add a 
pendant vertex $w_i$ adjacent to $v$ with $\gamma(w_i)=\mu(w_i)=i$. Since 
$\mathcal{C}$ is closed under $\psi$, we have $\psi(G) \in \mathcal{C}$. The 
construction is clearly linear in the size of $(G,L)$. We show that $(G,L)$ admits a proper list-coloring if and only if $\psi(G)$ admits 
a proper $(\gamma,\mu)$-coloring.

($\Rightarrow$) Suppose there exists a proper coloring $c:V(G)\to\mathbb{N}$ such 
that $c(v)\in L(v)$ for all $v\in V(G)$. Define $c':V(\psi(G))\to\mathbb{N}$ by 
$c'(v)=c(v)$ for every $v\in V(G)$, and $c'(w_i)=i$ for every pendant vertex 
$w_i$. Since $c(v)\in L(v)$, we have $c(v)\neq i$ whenever $i\notin L(v)$, and 
hence $c'(v)\neq c'(w_i)$. Moreover, since $c$ was a proper coloring, adjacent 
original vertices maintain distinct colors. Finally, each constraint 
$\gamma(x)\le c'(x)\le \mu(x)$ holds by construction. Thus $c'$ is a valid 
$(\gamma,\mu)$-coloring of $\psi(G)$.

($\Leftarrow$) Conversely, suppose that $\psi(G)$ has a proper $(\gamma,\mu)$-coloring 
$c'$. For each pendant vertex $w_i$, we have $\gamma(w_i)=\mu(w_i)=i$, hence 
$c'(w_i)=i$. If $c'(v)=i$ for any $i\notin L(v)$, then $v$ and $w_i$ would receive 
the same color, contradicting properness. Therefore $c'(v)\in L(v)$ for all 
$v\in V(G)$, and adjacency constraints are preserved among original vertices. 
Hence the restriction of $c'$ to $V(G)$ is a valid list-coloring of $(G,L)$.

We have therefore shown a linear-time reduction from \textit{List Coloring} to 
$(\gamma,\mu)$-coloring that preserves solutions, certificates, and their size 
up to linear functions. The reverse reduction follows by observing that any 
instance of $(\gamma,\mu)$-coloring can be viewed as a list-coloring instance 
with interval lists $L(v)=\{\gamma(v),\gamma(v)+1,\dots,\mu(v)\}$. Thus, on any class of graphs closed under $\psi$, the problems 
\textit{List Coloring} and $(\gamma,\mu)$-coloring satisfy the analogicity 
conditions of Fellows et al.
\end{proof}

The theorem above shows that, on any graph class closed under the operator 
$\psi$, \textit{List Coloring} and $(\gamma,\mu)$-coloring satisfy the strong 
equivalence required by Fellows et al. Hence, in such classes, both problems 
are interchangeable from a complexity-theoretic perspective. This immediately 
implies that any algorithmic or parameterized result proved for one of them 
must also hold for the other. The corollary below formalizes this fact.

\begin{corollary}
Let $\mathcal{C}$ be any graph class closed under the operator $\psi$. 
If a computational or parameterized complexity result holds for 
\textit{List Coloring} on $\mathcal{C}$, then the same result holds for 
$(\gamma,\mu)$-\textit{coloring} on $\mathcal{C}$, and vice versa. In particular, if \textit{List Coloring} is polynomial-time solvable (resp.\ 
NP-complete, FPT under some parameter, W[1]-hard, etc.) on $\mathcal{C}$, then 
$(\gamma,\mu)$-\textit{coloring} inherits exactly the same complexity behavior 
on $\mathcal{C}$.
\end{corollary}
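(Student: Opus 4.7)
The plan is to obtain the corollary as an immediate consequence of Lemma~\ref{lemma:analogo-main}, by instantiating the abstract notion of analogicity against each concrete complexity class mentioned in the statement. Since the lemma already supplies two linear-time reductions that preserve feasibility, solution size, and membership in the graph class, any algorithm or hardness reduction established for one problem on $\mathcal{C}$ can be pre- or post-composed with the appropriate reduction to yield the matching statement for the other problem, without ever leaving $\mathcal{C}$.

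First I would handle the classical cases. To transfer a polynomial-time algorithm for \textit{List Coloring} on $\mathcal{C}$ to $(\gamma,\mu)$-coloring, I would use the trivial backward reduction in which each $(\gamma,\mu)$-coloring instance is viewed as a list-coloring instance with interval lists $L(v)=\{\gamma(v),\dots,\mu(v)\}$; this keeps the underlying graph unchanged, so it remains in $\mathcal{C}$. In the opposite direction I would apply the forward reduction $\psi$, which is linear-time, produces a graph still in $\mathcal{C}$ by the closure hypothesis, and, by Lemma~\ref{lemma:analogo-main}, yields an equivalent $(\gamma,\mu)$-coloring instance. Composing these reductions with polynomial-time algorithms or with NP-hardness reductions is immediate, so both polynomial-time solvability and NP-completeness transfer in both directions.

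For the parameterized cases (FPT, $W[t]$-hardness, XP, and so on) I would instead invoke the $p$-analogicity definition recalled earlier, which additionally requires that the parameter of the source instance and that of the target instance be related by linear functions. Here I would argue parameter by parameter: for structural parameters such as treewidth, pathwidth, or clique-width, the operator $\psi$ only attaches pendant vertices, so these parameters are preserved exactly or up to a constant; for parameters such as maximum degree, I would bound the effect of $\psi$ linearly in terms of the original color palette; and for the trivial inverse reduction, no parameter changes at all. With the parameter correspondences checked, FPT algorithms and parameterized lower bounds transfer verbatim through the reductions.

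The main obstacle concentrates precisely in this last step: verifying that the parameter under consideration actually behaves linearly under $\psi$. The operator $\psi$ may attach, for each vertex $v$ and each color $i \notin L(v)$, a pendant vertex, which can significantly inflate parameters that depend on the total number of vertices or on the size of the color palette. Thus, while the corollary holds cleanly for parameters invariant under pendant additions, an explicit verification is required for each non-structural parameter, and for some of them the strict linear correspondence demanded by $p$-analogicity may need to be relaxed to a polynomial one. This makes the classical half of the corollary essentially automatic, while the parameterized half reduces to a case-by-case analysis of parameter behavior under $\psi$.
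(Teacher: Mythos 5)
Your proposal follows essentially the same route as the paper: both derive the corollary by composing the two linear-time reductions supplied by Lemma~\ref{lemma:analogo-main} (the $\psi$ construction in one direction, the reinterpretation of interval bounds as lists $L(v)=\{\gamma(v),\dots,\mu(v)\}$ in the other) with whatever algorithm or hardness reduction exists for the companion problem, the closure of $\mathcal{C}$ under $\psi$ keeping everything inside the class. Your closing caveat, that the parameterized half of the claim requires a per-parameter verification of how the pendant vertices added by $\psi$ affect the parameter, is in fact more careful than the paper's own proof, which simply asserts that the reductions never increase parameters beyond linear transformations.
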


\begin{proof}
Since the theorem establishes that \textit{List Coloring} and 
$(\gamma,\mu)$-coloring are analogous on $\mathcal{C}$, there exist linear-time 
reductions $f$ and $g$ between the two problems satisfying the analogicity 
conditions of Fellows et al. These reductions preserve yes-instances, feasible 
solutions, and the size of certificates up to linear functions.

Therefore, any algorithmic upper bound for one problem translates directly 
into an algorithmic upper bound for the other via composition with $f$ or $g$, 
and any hardness or intractability result transfers in the opposite direction. 
The same holds for parameterized reductions because the reductions do not 
increase parameter values beyond linear transformations. Hence all complexity and parameterized properties shared through such 
reductions are preserved bidirectionally between the two problems on 
$\mathcal{C}$.
\end{proof}

To extend the previous correspondence to the parameterized framework, we next  
establish that the relevant coloring problems preserve their structural  
relationship under standard width parameters. The following lemma formalizes  
this.

\begin{lemma}\label{lemma:p-Analogous-final}
Let $\mathcal{C}$ be any hereditary graph class that is closed under the addition  
of pendant vertices. Then the problems \textsc{List Coloring},  
$(\gamma,\mu)$\textsc{-Coloring}, and \textsc{Precoloring Extension}, when  
restricted to $\mathcal{C}$, are $p$-analogous under the parameters  
\emph{treewidth} and \emph{feedback vertex set}.
\end{lemma}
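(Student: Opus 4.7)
The plan is to establish the three-way $p$-analogicity by combining the structural reduction already proved in Lemma~\ref{lemma:analogo-main} with a direct bidirectional correspondence between \textsc{Precoloring Extension} and the other two problems, and then verifying that every reduction involved preserves both treewidth and feedback vertex set up to linear (indeed, identity) transformations. The hereditary hypothesis together with closure under pendant addition ensures that each constructed instance stays inside $\mathcal{C}$, so parameterized reductions never leave the class.

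First I would reuse the operator $\psi$ exactly as defined before to link \textsc{List Coloring} and $(\gamma,\mu)$-\textsc{Coloring}. To incorporate \textsc{Precoloring Extension}, I would give two direct reductions. In the direction \textsc{Precoloring Extension} $\to$ \textsc{List Coloring}, one keeps the same underlying graph and assigns $L(v)=\{f'(v)\}$ to each precolored vertex $v\in W$ and $L(v)=\{1,\ldots,k\}$ to the remaining vertices; the graph itself is unchanged, so every structural parameter is preserved trivially. In the direction \textsc{List Coloring} $\to$ \textsc{Precoloring Extension}, I would apply the pendant-extension of Lemma~\ref{lemma:analogo-main}, but instead of assigning the interval constraints $\gamma(w_i)=\mu(w_i)=i$ to each new pendant, I would place $w_i$ into the precolored set $W$ with colour $i$. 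The correctness proof is a line-by-line transcription of the one already used for $\psi$: a colour forbidden by $L(v)$ is forced onto its corresponding pendant, hence cannot be chosen at $v$, and conversely.

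Next I would verify parameter preservation. For \emph{treewidth}, any tree decomposition $(T,\mathcal{B})$ of $G$ of width $t\ge 1$ extends to one of $\psi(G)$ by appending, for each pendant $w_i$ created at $v$, a leaf bag $\{v,w_i\}$ to any bag of $\mathcal{B}$ containing $v$; the width remains $\max(t,1)$, and since $G$ is an induced subgraph of $\psi(G)$ the reverse inequality $\mathrm{tw}(G)\le\mathrm{tw}(\psi(G))$ is automatic, giving $\mathrm{tw}(\psi(G))=\mathrm{tw}(G)$ whenever $\mathrm{tw}(G)\ge 1$. For the \emph{feedback vertex set}, every new pendant vertex has degree one and therefore lies on no cycle, so any feedback vertex set of $G$ is a feedback vertex set of $\psi(G)$ and vice versa, yielding $\mathrm{fvs}(\psi(G))=\mathrm{fvs}(G)$. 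The reductions involving \textsc{Precoloring Extension} either leave the graph untouched or reuse the same pendant construction, so both parameters are preserved under the identity map $\varphi_i(k)=k$, which is trivially linear and therefore satisfies the $p$-analogicity requirement of Fellows et al.

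The main technical obstacle I expect is not correctness but the \emph{linear-time} clause of the $p$-analogicity definition: the operator $\psi$ may attach up to $c$ new pendants per original vertex, where $c$ is the largest colour appearing in the lists, so the size of $\psi(G)$ is potentially $|V(G)|\cdot c$. The argument rests on observing that any reasonable encoding of a list-coloring, $(\gamma,\mu)$-coloring, or precoloring instance already pays $\Omega\bigl(\sum_v|L(v)|\bigr)$ or $\Omega\bigl(\sum_v(\mu(v)-\gamma(v)+1)\bigr)$ bits in the input, so the pendants added by $\psi$ are bounded by the input size and the transformation runs in time linear in that size. Once this is settled, closure of $\mathcal{C}$ under pendant addition guarantees $\psi(G)\in\mathcal{C}$, and the combination of linear-time reductions, identical parameter values, and bidirectional certificate preservation delivers all three conditions required by the definition of $p$-analogous problems.
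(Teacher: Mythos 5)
Your proposal follows essentially the same route as the paper's proof: reuse the $\psi$-based analogicity of Lemma~\ref{lemma:analogo-main} and observe that attaching pendant vertices preserves both treewidth and feedback vertex set exactly, so the parameter correspondences are identities. Your write-up is in fact more complete than the paper's own argument, which never spells out the reductions involving \textsc{Precoloring Extension} (your singleton-list reduction and precolored-pendant construction supply exactly what is missing) and silently passes over the linear-time concern you correctly flag, namely that $\psi$ may add up to $c-|L(v)|$ pendants per vertex, which is bounded by the input size only under a dense (e.g., bitvector over $\{1,\dots,c\}$) encoding of the lists.
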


\begin{proof}
By Lemma~\ref{lemma:analogo-main}, \textsc{List Coloring} and  
$(\gamma,\mu)$\textsc{-Coloring} are analogous when instances are transformed  
using the operator $\psi$. Since $\mathcal{C}$ is hereditary and closed under the  
addition of pendant vertices, we have $\psi(G) \in \mathcal{C}$ for every  
$G \in \mathcal{C}$.

To verify $p$-analogicity, we show that both treewidth and the size of a minimum  
feedback vertex set are preserved under $\psi$. The operator $\psi$ augments $G$  
only by attaching pendant vertices, which neither creates cycles nor increases  
the width of any tree decomposition. Therefore:
\[
\mathrm{tw}(G) = \mathrm{tw}(\psi(G)) \quad \text{and} \quad 
\mathrm{fvs}(G) = \mathrm{fvs}(\psi(G)).
\]

The forward and backward reductions map feasible solutions to feasible solutions  
of identical size, satisfying the requirements in the definition of analogous and  
$p$-analogous problems. Since the parameters are preserved exactly, the three  
problems are $p$-analogous on $\mathcal{C}$.
\end{proof}

\begin{theorem}\label{Theo:GammaMu-final}
The problems $(\gamma,\mu)$\textsc{-Coloring} and \textsc{Precoloring Extension}  
parameterized by the size of a minimum feedback vertex set are $W[1]$-hard even  
when restricted to bipartite graphs.
\end{theorem}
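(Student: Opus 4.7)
The plan is to establish the theorem in two stages: first, reduce the task to proving $W[1]$-hardness for \textsc{List Coloring} alone on bipartite graphs with the feedback vertex set parameter; and second, invoke the $p$-analogicity established in Lemma~\ref{lemma:p-Analogous-final} to transfer this hardness simultaneously to both $(\gamma,\mu)$-\textsc{Coloring} and \textsc{Precoloring Extension}. This strategy is sound because the class of bipartite graphs is clearly hereditary (every induced subgraph of a bipartite graph is bipartite) and is closed under the addition of pendant vertices (a new pendant can always be placed in the partition opposite to its neighbor). Therefore bipartite graphs satisfy the hypotheses of Lemma~\ref{lemma:p-Analogous-final}, and once the base hardness for \textsc{List Coloring} is in place, the operator $\psi$ preserves bipartiteness and the exact value of the feedback vertex set, so the hardness transfers cleanly to the other two problems.

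For the base case, I would give a parameterized reduction from \textsc{Multicolor Clique}, the canonical $W[1]$-complete problem. Given an instance with $k$ color classes $V_1,\dots,V_k$ in a host graph $H$, the construction introduces one \emph{selector} vertex $s_i$ per class $i$ with list $L(s_i)=V_i$, so that assigning a color to $s_i$ corresponds to selecting a vertex of $V_i$. The non-edges of $H$ between classes $i$ and $j$ are then encoded by bipartite \emph{incompatibility} gadgets whose vertices lie entirely in the partition opposite the selectors; their lists are tailored so that any forbidden pair $(u,v)\in V_i\times V_j$ with $uv\notin E(H)$ cannot coexist on $(s_i,s_j)$. Keeping every gadget vertex on one side of the bipartition preserves the bipartite structure, and the feedback vertex set of the resulting instance is bounded by $O(k^2)$, since deleting the $k$ selectors together with one designated vertex per pairwise gadget leaves only a forest of pendants.

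The main obstacle will be controlling three requirements at once: bipartiteness of the constructed graph, a feedback vertex set bounded by a function of $k$ alone, and a faithful encoding of clique consistency through list constraints. Many standard $W[1]$-hardness reductions for \textsc{List Coloring} on width-restricted classes, including those stemming from \citep{fellows2011complexity}, introduce odd cycles or densely interconnected gadgets that either break bipartiteness or inflate the feedback vertex set. My strategy to overcome this is to replace each problematic odd cycle by an even cycle of doubled length decorated with auxiliary pendants carrying singleton lists that simulate the original color-propagation behavior; pendants leave both the feedback vertex set and bipartiteness invariant, so the required parameter bound is maintained throughout.

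Finally, with the base hardness secured, the theorem follows by applying the forward reduction from Lemma~\ref{lemma:p-Analogous-final} to the constructed \textsc{List Coloring} instance. The image under $\psi$ remains bipartite because only pendant vertices are attached, and each such pendant may be assigned to whichever partition its parent does not belong to; moreover its feedback vertex set coincides exactly with that of the input graph, since pendants lie outside every cycle. Hence the parameter is preserved linearly, the reduction runs in linear time, and the $W[1]$-hardness transfers immediately to $(\gamma,\mu)$-\textsc{Coloring} and \textsc{Precoloring Extension} on bipartite graphs parameterized by the feedback vertex set, establishing the theorem.
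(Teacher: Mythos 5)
Your proposal matches the paper's proof: both take as the base case the $W[1]$-hardness of \textsc{List Coloring} on bipartite graphs parameterized by the feedback vertex set, and transfer it to $(\gamma,\mu)$-\textsc{Coloring} and \textsc{Precoloring Extension} via the $p$-analogicity of Lemma~\ref{lemma:p-Analogous-final}, using the fact that $\psi$ adds only pendant vertices and therefore preserves both bipartiteness and the feedback vertex set. The only difference is that the paper simply cites \cite{fellows2011complexity} for the base hardness, whereas you sketch the underlying \textsc{Multicolor Clique} reduction yourself; that extra work is not needed here, and your concern about odd cycles does not in fact arise, since the standard selector/incompatibility-gadget construction is already bipartite with the $k$ selectors forming a feedback vertex set.
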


\begin{proof}
By \cite{fellows2011complexity}, \textsc{List Coloring} is $W[1]$-hard on  
bipartite graphs when parameterized by the feedback vertex set.  
Let $\mathcal{C}$ be the class of bipartite graphs. Since the operator $\psi$  
adds only pendant vertices to $G$, it preserves bipartiteness; hence  
$\mathcal{C}$ is closed under $\psi$.

By Lemma~\ref{lemma:p-Analogous-final}, on any class closed under $\psi$ the  
problems \textsc{List Coloring}, $(\gamma,\mu)$\textsc{-Coloring}, and  
\textsc{Precoloring Extension} are $p$-analogous under the feedback vertex set  
parameter. Therefore, $W[1]$-hardness transfers from \textsc{List Coloring} to  
both $(\gamma,\mu)$\textsc{-Coloring} and \textsc{Precoloring Extension} on the  
class of bipartite graphs.
\end{proof}

Since $\mu$-coloring is a syntactic restriction of  
$(\gamma,\mu)$\textsc{-coloring} (obtained by fixing $\gamma(v)=1$ for every  
vertex), its $W[1]$-hardness under the feedback vertex set parameter also holds.

We now turn to $\mu$-coloring. In contrast to the equivalence obtained for 
\textsc{List Coloring} and $(\gamma,\mu)$\textsc{-coloring} on $\psi$-closed 
graph classes, $\mu$-coloring does not admit the same kind of structural 
flexibility. In particular, there exist graph classes on which \textsc{List 
Coloring} and $\mu$-coloring have fundamentally different computational 
behaviour, and this alone is enough to rule out any analogicity between them 
in the sense of Fellows et al.

Throughout this subsection, let $\mathcal{C}$ be a fixed graph class and consider 
the restrictions of \textsc{List Coloring} and $\mu$-coloring to instances whose 
underlying graph lies in $\mathcal{C}$. We assume:

\begin{itemize}
    \item (H1a)\textsc{List Coloring} is NP-complete on $\mathcal{C}$;
    \item (H1b) $\mu$-coloring is solvable in polynomial time on $\mathcal{C}$;
    \item (H2) $P \neq NP$.
\end{itemize}

Typical examples of such a class $\mathcal{C}$ include clique-trees of 
height~2, where \textsc{List Coloring} is NP-complete while $\mu$-coloring is 
polynomial-time solvable (\cite{bonomo2012coloring}), and, as we shall see, 
the class of cographs under the algorithmic results proved later in this work.

We say that the restricted problems \textsc{List Coloring}$\!\upharpoonright\!\mathcal{C}$ 
and $\mu$-coloring$\!\upharpoonright\!\mathcal{C}$ are analogous on $\mathcal{C}$ if there 
exist linear-time reductions between them satisfying the analogicity conditions 
of Fellows \textit{et al.}\ when both the input and the output graphs are required to 
belong to~$\mathcal{C}$.

\begin{theorem}\label{thm:no-analogicity-mu-list}
Under assumptions \textnormal{(H1a)--(H2)}, the restricted problems 
\textnormal{\textsc{List Coloring}}$\!\upharpoonright\!\mathcal{C}$ and 
\textnormal{$\mu$-coloring}$\!\upharpoonright\!\mathcal{C}$ cannot be analogous on 
$\mathcal{C}$ in the sense of Fellows et al. In particular, there is no pair of linear-time reductions 
$f : \textsc{List Coloring}\!\upharpoonright\!\mathcal{C} \to 
\mu\text{-coloring}\!\upharpoonright\!\mathcal{C}$ 
and 
$g : \mu\text{-coloring}\!\upharpoonright\!\mathcal{C} \to 
\textsc{List Coloring}\!\upharpoonright\!\mathcal{C}$
satisfying the analogicity conditions. Hence, on $\mathcal{C}$, 
\textnormal{\textsc{List Coloring}} and \textnormal{$\mu$-coloring} are not analogous problems.
\end{theorem}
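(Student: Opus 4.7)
The plan is to derive a straightforward contradiction by exploiting the complexity gap between the two problems on $\mathcal{C}$, invoking only the reduction $f$ from \textsc{List Coloring} to $\mu$-coloring (the reverse reduction $g$ need not be used). Suppose, toward contradiction, that the two restricted problems are analogous on $\mathcal{C}$ in the sense of Fellows et al. Then, by definition, there exists a linear-time many-one reduction $f$ that sends every instance $I$ of \textsc{List Coloring}$\!\upharpoonright\!\mathcal{C}$ to an instance $f(I)$ of $\mu$-coloring$\!\upharpoonright\!\mathcal{C}$, with $I$ a yes-instance if and only if $f(I)$ is (the forward and backward solution-preservation clauses immediately imply this many-one equivalence); moreover, since $f$ is required to be a reduction into the restricted problem, its output graph necessarily lies in $\mathcal{C}$.

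By hypothesis (H1b), $\mu$-coloring on $\mathcal{C}$ admits a polynomial-time decision algorithm $\mathcal{A}$. Composing $f$ with $\mathcal{A}$ yields a polynomial-time procedure for \textsc{List Coloring}$\!\upharpoonright\!\mathcal{C}$: compute $f(I)$ in linear time, then run $\mathcal{A}$ on $f(I)$ in time polynomial in $|f(I)| = O(|I|)$. The composed algorithm therefore runs in polynomial time in $|I|$. Combined with hypothesis (H1a), namely that \textsc{List Coloring}$\!\upharpoonright\!\mathcal{C}$ is NP-complete, this would place an NP-complete problem in P, contradicting (H2). Consequently no such reduction $f$ can exist, and in particular no pair $(f, g)$ can satisfy the analogicity conditions, establishing the theorem.

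The only point requiring a brief justification, rather than a deep combinatorial argument, is the verification that the postulated reduction $f$ really does produce instances inside $\mathcal{C}$ and of size linearly bounded in $|I|$. Both facts follow directly from the definitions involved: a reduction between problems restricted to $\mathcal{C}$ must, by the very meaning of the restriction, output graphs in $\mathcal{C}$, and linear-time computability automatically forces the output size to be $O(|I|)$, so that $\mathcal{A}$ is applicable and runs in polynomial time on $f(I)$. No structural property of $\mu$-coloring or \textsc{List Coloring} beyond the hypotheses (H1a)--(H2) is needed; the argument is a pure complexity-collapse obstruction. The essence of the result, which I would emphasize in the proof's closing remark, is therefore that the failure of analogicity is not caused by any intrinsic combinatorial mismatch between list-based and $\mu$-based constraints, but solely by the impossibility of bridging an NP-complete and a polynomial-time-solvable problem through a linear-time many-one reduction under $P \neq NP$.
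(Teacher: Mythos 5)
Your proposal is correct and follows essentially the same route as the paper's own proof: assume analogicity, use only the forward reduction $f$ composed with the polynomial-time algorithm for $\mu$-coloring$\!\upharpoonright\!\mathcal{C}$ guaranteed by (H1b), and derive a contradiction with (H1a) and (H2). Your explicit remarks that $g$ is never needed and that the output of $f$ must lie in $\mathcal{C}$ with linearly bounded size are points the paper makes implicitly, so there is no substantive difference.
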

\begin{proof}

By (H1a) and (H1b), \textsc{List Coloring}$\!\upharpoonright\!\mathcal{C}$ is 
NP-complete, whereas $\mu$-coloring$\!\upharpoonright\!\mathcal{C}$ is solvable 
in polynomial time. Assume, for contradiction, that the two restricted 
problems are analogous on $\mathcal{C}$ in the sense of Fellows et al., that is, 
there exist linear-time reductions $f$ and $g$ between them that preserve 
yes-instances, feasible solutions, and certificate sizes up to linear functions, 
and that both $f$ and $g$ map instances whose underlying graph lies in 
$\mathcal{C}$ to instances whose underlying graph also lies in $\mathcal{C}$.

Consider any instance $(G,L)$ of \textsc{List Coloring} with $G \in \mathcal{C}$. 
Applying $f$ yields, in linear time, an instance $(G',\mu') = f(G,L)$ of 
$\mu$-coloring whose underlying graph still belongs to $\mathcal{C}$. By (H1b), 
$\mu$-coloring$\!\upharpoonright\!\mathcal{C}$ can be solved in polynomial time, 
so we can decide in polynomial time whether $(G',\mu')$ is a yes-instance of 
$\mu$-coloring. By correctness of the reduction $f$, the instance $(G,L)$ is a 
yes-instance of \textsc{List Coloring}$\!\upharpoonright\!\mathcal{C}$ if and only 
if $(G',\mu')$ is a yes-instance of $\mu$-coloring$\!\upharpoonright\!\mathcal{C}$.

Therefore, composing $f$ with the polynomial-time algorithm for 
$\mu$-coloring$\!\upharpoonright\!\mathcal{C}$ yields a polynomial-time algorithm 
for \textsc{List Coloring}$\!\upharpoonright\!\mathcal{C}$. This contradicts (H1a), 
since \textsc{List Coloring} is NP-complete on $\mathcal{C}$ and, by (H2), 
$P \neq NP$. Hence, no such reductions $f$ and $g$ can exist, and the two 
problems are not analogous on $\mathcal{C}$.
\end{proof}

\subsection{Analogous problems in $k$-choosability}

It is unlikely that any analogicity result, in the sense of \cite{fellows2015tractability}, can hold between 
$k$-choosability and DP-$k$-choosability. The obstruction is structural: $k$-choosability is defined by the logical pattern
\[
\forall L\; (|L(v)|=k)\; \exists \varphi,
\]
whereas DP-$k$-choosability adds an additional universal quantification over all compatible DP-covers. Thus, the expressive 
power of DP-$k$-choosability is strictly stronger, and it is highly implausible that bijective linear-time reductions 
preserving solution structure could exist. This difficulty disappears when one replaces $k$-choosability by 
$k$-colorability: under classical covers, DP-$k$-colorability collapses to ordinary $k$-coloring.

The problems of $k$-choosability and $(k:1)$-choosability arise as specific instances within the general framework of 
list-based coloring constraints. Although defined differently, both enforce the same requirement: for every list assignment 
of size $k$, one must be able to pick exactly one color from each list so as to obtain a proper coloring. The formal 
definitions follow.

\begin{defi}
Fix $k \ge 1$. For a graph $G$:

\begin{itemize}
    \item $G$ is \emph{$k$-choosable} if, for every list assignment 
    $L$ with $|L(v)| = k$ for all $v$, there exists a proper coloring 
    $\varphi$ with $\varphi(v)\in L(v)$.

    \item $G$ is \emph{$(k:1)$-choosable} if, for every list assignment 
    $L$ with $|L(v)| = k$, there exists a mapping $\Phi$ with 
    $|\Phi(v)| = 1$, $\Phi(v)\subseteq L(v)$, and 
    $\Phi(u)\cap\Phi(v)=\emptyset$ whenever $uv\in E(G)$.
\end{itemize}

Since a singleton choice $\Phi(v)=\{\varphi(v)\}$ is equivalent to selecting a single color, both formalisms encode the same combinatorial constraint.
\end{defi}

\begin{theorem}
For fixed $k$, $k$-choosability and $(k:1)$-choosability are analogous problems in the sense of 
\cite{fellows2015tractability}: there exist linear-time reductions in both directions, and every certificate 
for one formulation corresponds exactly to a certificate for the other with identical size.
\end{theorem}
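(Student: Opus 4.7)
The plan is to exhibit the identity reduction in both directions and show that the trivial bijection $\varphi \leftrightarrow \Phi_\varphi$ between single-color selections and singleton-valued mappings satisfies every clause of the analogicity definition of Fellows et al. First, I would observe that an instance of either problem is fully specified by the pair $(G,k)$, and that the two formulations differ only in how a chosen color is packaged: $\varphi(v)$ is a scalar, while $\Phi(v)$ is a singleton set. Thus the reductions $f$ and $g$ can be taken as the identity on $(G,k)$, which is linear-time trivially and maps yes-instances to yes-instances.

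Next, I would establish the bijection at the level of certificates. For any list assignment $L$ with $|L(v)|=k$, define the map $\varphi \mapsto \Phi_\varphi$ by $\Phi_\varphi(v) = \{\varphi(v)\}$, and its inverse $\Phi \mapsto \varphi_\Phi$ by extracting the unique element of $\Phi(v)$. I would then verify the three preservation properties: (i) $\Phi_\varphi(v)\subseteq L(v)$ if and only if $\varphi(v)\in L(v)$; (ii) $\Phi_\varphi(u)\cap \Phi_\varphi(v)=\emptyset$ along an edge $uv\in E(G)$ if and only if $\varphi(u)\neq \varphi(v)$; (iii) the cardinality constraint $|\Phi_\varphi(v)|=1$ is automatic. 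Hence a proper list coloring $\varphi$ corresponds exactly to a valid $(k:1)$-selection $\Phi$, and both encodings have identical size, namely one element per vertex, so certificate sizes coincide under $f$ and $g$ as required.

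The main subtlety, rather than a genuine obstacle, is aligning the $\forall L\,\exists \varphi$ logical structure of choosability, which sits at the second level of the polynomial hierarchy, with the solution-preservation language of Fellows et al., originally formulated for $\Sigma_1$-type optimization problems. To resolve this cleanly, I would treat the analogicity as operating on the inner existential subproblem \emph{does $(G,L)$ admit a valid selection?} and observe that the bijection above transfers solutions instance-by-instance, uniformly across all admissible $L$ of size exactly $k$. Since the outer universal quantifier ranges over the same collection of list assignments in both formulations, a yes-answer for $k$-choosability coincides with a yes-answer for $(k:1)$-choosability, and the linear-time identity reductions together with the certificate bijection fulfill every clause of the definition, completing the argument.
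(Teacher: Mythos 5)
Your proposal is correct and follows essentially the same route as the paper's own proof: identity reductions in both directions, the singleton bijection $\varphi \mapsto \Phi_\varphi(v)=\{\varphi(v)\}$ with its inverse, and verification that feasibility and certificate size are preserved. Your explicit remark about reconciling the $\forall L\,\exists\varphi$ quantifier structure with the Fellows et al.\ framework by working on the inner existential subproblem uniformly over all $L$ is a point the paper leaves implicit, but it does not change the argument.
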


\begin{proof}
Fix $k\ge1$. We first prove that $G$ is $k$-choosable if and only if it is $(k:1)$-choosable.

Let $L$ be any list assignment with $|L(v)|=k$. If $G$ is $k$-choosable, there exists a proper list-coloring 
$\varphi$. Define $\Phi(v)=\{\varphi(v)\}$. Then $|\Phi(v)|=1$, $\Phi(v)\subseteq L(v)$, and adjacency implies 
disjointness. Thus $G$ is $(k:1)$-choosable.

Conversely, suppose $G$ is $(k:1)$-choosable. Given $L$, let $\Phi$ be a valid witness. For each $v$, let 
$\varphi(v)$ be the unique element of $\Phi(v)$. Properness follows from disjointness. Hence $G$ is k-choosable.

Thus $G$ is a Yes-instance of one problem iff it is a Yes-instance of the other.

\medskip\noindent\textbf{Reductions.}
Instances of both problems are simply graphs. Define
\[
f(G)=G, \qquad g(G)=G.
\]
These are linear-time and preserve Yes/No answers by the equivalence above.

\medskip\noindent\textbf{Certificates.}
In both formulations (under a fixed list assignment $L$), a certificate is simply a choice of one color per 
vertex. Under a standard encoding, both certificates have exactly one color per vertex, hence the same size. 
The transformations
\[
\varphi \mapsto \Phi(v)=\{\varphi(v)\}, \qquad 
\Phi \mapsto \varphi(v) \text{ (the unique element of }\Phi(v))
\]
are bijective and computable in linear time.

\medskip\noindent\textbf{Parameterized version.}
Let $\Pi(G,\kappa(G))$ and $\Pi'(G,\kappa(G))$ denote the parameterized versions of $k$-choosability and 
$(k:1)$-choosability with respect to any graph parameter $\kappa$ (treewidth, maximum degree, feedback vertex 
number, etc.). Since the reductions $f$ and $g$ are the identity on $G$, they preserve $\kappa(G)$ exactly, 
and the correspondence between certificates is size-preserving. Therefore, by the definition of $p$-analogous 
problems in \cite{fellows2015tractability}, the parameterized problems 
$(k\text{-choosability},\kappa)$ and $((k:1)\text{-choosability},\kappa)$ are $p$-analogous.

\end{proof}

\begin{corollary}
For any fixed $k$ and any graph class $\mathcal{C}$, deciding $k$-choosability and deciding $(k:1)$-choosability 
on graphs of $\mathcal{C}$ have identical complexity classifications. In particular, $k$-choosability is 
polynomial-time solvable (respectively, NP-complete, $\Pi_2^P$-complete) on $\mathcal{C}$ if and only if 
$(k:1)$-choosability is polynomial-time solvable (respectively, NP-complete, $\Pi_2^P$-complete) on $\mathcal{C}$. 
The same equivalence holds for all parameterized versions with respect to parameters depending only on $G$.
\end{corollary}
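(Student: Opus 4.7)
The plan is to derive this corollary as an essentially immediate consequence of the analogicity theorem just established, exploiting the fact that the reductions $f$ and $g$ produced there act as the identity on the underlying graph. First I would observe that, because $f(G)=G$ and $g(G)=G$, any instance whose underlying graph lies in a fixed class $\mathcal{C}$ is mapped to an instance whose underlying graph still lies in $\mathcal{C}$. Thus the analogicity relation restricts cleanly to the class $\mathcal{C}$, without any structural transformation that could leave the class, a point that usually causes friction when transferring results between problems but which here is free of charge.

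Next I would carry out the standard complexity transfer along both linear-time reductions. For polynomial-time solvability, composing a hypothetical polynomial-time algorithm for one problem with the (identity) reduction immediately yields one for the other. For \textsc{NP}-completeness, membership in \textsc{NP} follows from the bijective, size-preserving correspondence between certificates $\varphi$ and $\Phi(v)=\{\varphi(v)\}$ exhibited in the theorem, while hardness transfers because linear-time reductions are, in particular, polynomial-time many-one reductions. For $\Pi_2^P$-completeness, the same argument applies: the outer universal quantification over list assignments $L$ with $|L(v)|=k$ and the inner existential quantification over feasible choices coincide verbatim in both formulations, so yes-instances, no-instances, and the quantifier alternation depth are all preserved.

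For the parameterized statement, I would invoke the $p$-analogicity part of the theorem directly. Since the reductions are the identity on $G$, any parameter $\kappa$ whose value depends only on the graph satisfies $\kappa(f(G))=\kappa(G)$ and $\kappa(g(G))=\kappa(G)$ exactly (not merely up to a linear function), so \textsc{FPT} algorithms, \textsc{XP} algorithms, and $W[t]$-hardness reductions carry over in both directions without any parameter inflation and without needing to re-examine the definition of $p$-analogicity.

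The main obstacle, if one insists on being thorough, is purely notational: one must state carefully what it means for an instance of $(k{:}1)$-choosability or of $k$-choosability to \emph{belong to} $\mathcal{C}$, and verify that the encoding of certificates on both sides uses the same polynomial-size representation, so that the complexity classes $\mathsf{P}$, \textsc{NP}, $\Pi_2^P$, and the parameterized classes are invoked meaningfully. Beyond this bookkeeping, no new combinatorial work is required, since the preceding theorem already supplies identity reductions together with a bijective, size-preserving correspondence between feasible solutions, and the corollary is obtained by simply reading off the consequences of that correspondence.
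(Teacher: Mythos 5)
Your proposal is correct and matches the paper's intent: the paper states this corollary without a separate proof, treating it as an immediate consequence of the preceding theorem, whose identity reductions $f(G)=G$ and $g(G)=G$ keep instances inside $\mathcal{C}$ and preserve certificates and parameters exactly, which is precisely the argument you spell out. Since the two decision problems have literally the same yes-instances on every graph class, all the complexity transfers you describe go through without further work.
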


\section{Restricted \texorpdfstring{$k$}{k}-choosability in Graphs}\label{sec:restricted-k-choosable}

In this work, we build upon the notion of $k$-$(\gamma,\mu)$-choosability, originally introduced in \cite{gama2018choosability}, in which the classical $k$-choosability framework is restricted by requiring that all admissible lists belong to a structured family of subsets. This restriction models scenarios where only highly organized color combinations are feasible, reflecting practical contexts of resource allocation subject to external constraints.

As a necessary preliminary step, we first formalize the $k$-$(\gamma,\mu)$-coloring problem, in which each vertex $v$ receives an admissible list consisting of a consecutive interval of integers of fixed size $k$; that is, an interval $[\gamma(v), \mu(v)]$ satisfying $\mu(v) - \gamma(v) + 1 = k$. The analysis of this restricted coloring variant is essential for two main reasons. First, it allows us to isolate the structural impact of imposing consecutive-interval lists, clearly separating it from the additional phenomena introduced by the universal quantification inherent to choosability. Second, the characterization of $k$-$(\gamma,\mu)$-coloring provides formal support for subsequent results, since several algorithmic properties of this restricted version do not extend to the general $(\gamma,\mu)$-coloring problem.

Only after establishing this conceptual foundation do we proceed to the definition of $k$-$(\gamma,\mu)$-choosability, which requires the graph to be colorable under \emph{every} possible assignment of consecutive intervals of size $k$. This introduces a substantially higher level of complexity, as the problem now incorporates universal quantification over all admissible list assignments, placing it naturally within higher levels of the complexity hierarchy. Therefore, the distinction between $k$-$(\gamma,\mu)$-coloring and $k$-$(\gamma,\mu)$-choosability is not merely methodological but necessary to support the computational complexity analysis developed in this article and to provide an appropriate initial characterization of this interval-constrained choosability model.

\begin{defi}
Let $G = (V, E)$ be a simple graph. An \emph{$k$-$(\gamma, \mu)$-coloring} of $G$ is a proper vertex coloring obtained from a list assignment $L$, where for each vertex $v \in V$:
\[
L(v) = \{\gamma(v), \gamma(v)+1, \dots, \mu(v)\}
\]
such that:
\begin{enumerate}
    \item The list is an integer interval of fixed size $k$: $\mu(v) - \gamma(v) + 1 = k$;
    \item The coloring $c: V \to \mathbb{N}$ satisfies $c(v) \in L(v)$ for all $v \in V$;
    \item Adjacent vertices receive different colors: $c(u) \neq c(v)$ for all $(u, v) \in E$.
\end{enumerate}
We say that $G$ is \emph{$k$-$(\gamma, \mu)$-colorable} if such a coloring exists under every assignment of integer intervals of size $k$ to the vertices.
\end{defi}

Thus, \(c_{(\gamma, \mu)}(G)\) represents a valid coloring of the graph \(G\) that respects the constraints imposed by the integer intervals \([\gamma(v), \mu(v)]\) assigned to each vertex.

The $k$-$(\gamma,\mu)$-coloring problem can be regarded as a syntactically restricted version of the $(\gamma,\mu)$-coloring problem, in which every admissible list assigned to a vertex is required to be an interval of fixed length $k$. Formally, for each vertex $v \in V(G)$, the associated list satisfies $\mu(v)-\gamma(v)+1 = k$.

Every instance of $k$-$(\gamma,\mu)$-coloring is therefore trivially an instance of $(\gamma,\mu)$-coloring. The converse, however, does not hold in general, since $(\gamma,\mu)$-coloring allows intervals of arbitrary length. As a consequence, there is no general linear-time reduction from $(\gamma,\mu)$-coloring to $k$-$(\gamma,\mu)$-coloring that preserves instances and solutions.

Moreover, the restriction to fixed-length intervals leads to fundamentally different algorithmic behavior. In particular, several graph classes for which $(\gamma,\mu)$-coloring remains NP-complete admit polynomial-time algorithms for $k$-$(\gamma,\mu)$-coloring when $k$ is fixed. This separation in complexity precludes the two problems from being analogous in the sense of Fellows \textit{et al}., despite their close syntactic relationship.

To analyze the computational requirements of the interval-based model, we first formalize the corresponding decision problem:

\noindent
\underline{\textsc{$k$-$(\gamma,\mu)$-coloring}}

\noindent
\textbf{Instance:} A graph $G$ and a list assignment $L$ for $G$ such that
$L(v) = [\gamma(v), \mu(v)]$ and $\mu(v) - \gamma(v) + 1 = k$ for all $v \in V(G)$.

\noindent
\textbf{Question:} Is there a proper coloring of $G$ that respects $L$?

\begin{proposition}
For any fixed integer $k \ge 3$, the \textsc{$k$-$(\gamma,\mu)$-coloring} problem
is NP-complete on general graphs.
\end{proposition}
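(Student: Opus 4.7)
The plan is to establish NP-completeness via the two standard components: membership in NP, followed by a polynomial-time many-one reduction from an appropriate NP-complete source problem. The natural choice for the source is classical \textsc{$k$-Coloring}, which is NP-complete for every fixed $k \ge 3$ by Karp's theorem. Membership in NP is immediate: a candidate assignment $c : V(G) \to \mathbb{N}$ can be verified in linear time by checking that $\gamma(v) \le c(v) \le \mu(v)$ for every $v \in V(G)$ and that $c(u) \ne c(v)$ for every edge $uv \in E(G)$; since each interval $L(v)$ has size exactly $k$, the certificate has polynomial size in the input.

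For NP-hardness, I would give the most direct reduction possible. Starting from an instance $G = (V,E)$ of \textsc{$k$-Coloring}, I would construct an instance of \textsc{$k$-$(\gamma,\mu)$-Coloring} on the same graph $G$ by defining $\gamma(v) = 1$ and $\mu(v) = k$ for every $v \in V$. The associated list is then $L(v) = \{1, 2, \ldots, k\}$, which is a consecutive integer interval satisfying $\mu(v) - \gamma(v) + 1 = k$, as required by the problem's structural constraint. This transformation is computable in linear time in $|V| + |E|$ and outputs a legal instance of \textsc{$k$-$(\gamma,\mu)$-Coloring}.

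Correctness is straightforward: a proper vertex coloring of $G$ that uses only colors from $\{1,\ldots,k\}$ is exactly a coloring that respects all the interval constraints $c(v) \in [1,k]$, and conversely any proper $k$-$(\gamma,\mu)$-coloring of the constructed instance yields a valid $k$-coloring of $G$. Hence the reduction preserves yes- and no-instances, and the hardness of \textsc{$k$-Coloring} for $k \ge 3$ transfers directly.

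Honestly, there is no real obstacle in this proof: classical \textsc{$k$-Coloring} is the natural subcase of \textsc{$k$-$(\gamma,\mu)$-Coloring} obtained when all vertex intervals coincide with $\{1,\ldots,k\}$. The only point requiring mild care is syntactic, namely verifying that the construction produces intervals of \emph{exact} length $k$ (not merely length at most $k$), so that the restricted problem's definition is respected; this follows from $\mu(v) - \gamma(v) + 1 = k - 1 + 1 = k$. Because the restriction to fixed-length intervals becomes meaningful only for structured graph classes, on general graphs the interval constraint offers no algorithmic advantage and the problem inherits the full hardness of unrestricted $k$-coloring.
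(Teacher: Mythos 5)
Your proposal is correct and follows exactly the same route as the paper's own proof: membership in NP by direct verification, followed by a reduction from classical \textsc{$k$-Coloring} that assigns the interval $[1,k]$ to every vertex. The only addition is your explicit check that the interval has exact length $k$, which is a harmless clarification rather than a difference in approach.
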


\begin{proof}
Membership in NP is immediate, since a coloring of $G$ can be verified in polynomial
time by checking whether adjacent vertices receive distinct colors and whether each
vertex $v$ is assigned a color in its interval $L(v)$.

To prove NP-hardness, we reduce from the classical \textsc{$k$-coloring} problem,
which is NP-complete for $k \ge 3$.
Given an arbitrary graph $G$, we construct an instance of
\textsc{$k$-$(\gamma,\mu)$-coloring} by assigning to every vertex $v \in V(G)$
the same interval list $L(v) = [1,k]$.
This construction is clearly computable in polynomial time. Observe that $G$ admits a proper $k$-coloring if and only if it admits a
\textsc{$k$-$(\gamma,\mu)$-Coloring} respecting the lists $L(v)=[1,k]$ for all $v$.
Therefore, the reduction is correct. Since \textsc{$k$-coloring} is NP-complete, it follows that
\textsc{$k$-$(\gamma,\mu)$-coloring} is NP-complete as well.
\end{proof}

A fundamental property in graph coloring is that a graph admitting a proper $k$-coloring also admits a proper $(\gamma, \mu)$-coloring where each list has size $k$, i.e., a $k$-$(\gamma, \mu)$-coloring. In this context, each vertex is assigned a list of $k$ consecutive colors determined by the parameters $(\gamma(v), \mu(v))$, and the goal is to select one color from each list to construct a proper coloring of the graph.

The key observation is that the existence of a proper $k$-coloring guarantees that, regardless of how the $(\gamma, \mu)$-lists of size $k$ are assigned, it is always possible to choose colors from those intervals such that adjacent vertices receive different colors. This relationship is formally established in Theorem~\ref{teo:general}.

\begin{theorem}\label{teo:general}
Let $G$ be a $k$-colorable graph. Then $G$ admits a proper coloring from $(\gamma, \mu)$-lists of size $k$ assigned to its vertices, where each list is an interval of $k$ consecutive integers. Consequently, every $k$-colorable graph also admits a $k$-$(\gamma, \mu)$-coloring, in which the parameter $k$ corresponds to the fixed size of the color interval assigned to each vertex.
\end{theorem}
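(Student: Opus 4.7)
The plan is to exploit the fact that an interval of $k$ consecutive integers contains \emph{exactly one} representative of each residue class modulo $k$. This modular property is what separates interval list assignments from arbitrary lists of size $k$ (for which the analogous statement is false, e.g.\ $K_{3,3}$ is $2$-colorable but not $2$-choosable in the classical sense), and it is what makes the implication ``$k$-colorable $\Rightarrow$ $k$-$(\gamma,\mu)$-colorable'' true despite the universal quantification over all interval assignments.

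First, fix any proper $k$-coloring $c^*:V(G)\to\{1,2,\ldots,k\}$, which exists by hypothesis. Next, let $L$ be an arbitrary interval assignment with $L(v)=[\gamma(v),\gamma(v)+k-1]$ for every $v\in V(G)$. The central observation is that since $L(v)$ consists of $k$ consecutive integers, the map $x\mapsto x \bmod k$ restricted to $L(v)$ is a bijection onto $\mathbb{Z}/k\mathbb{Z}$. Therefore, for each residue $r\in\{0,1,\ldots,k-1\}$ there is a unique element of $L(v)$ congruent to $r$ modulo $k$. Using this, define
\[
c(v) \;=\; \text{the unique element of } L(v) \text{ with } c(v)\equiv c^*(v)\pmod{k}.
\]
By construction, $c(v)\in L(v)$, so the list constraint is satisfied automatically.

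It remains to check properness. For any edge $uv\in E(G)$, since $c^*$ is a proper coloring we have $c^*(u)\neq c^*(v)$, with $c^*(u),c^*(v)\in\{1,\ldots,k\}$. Since the set $\{1,2,\ldots,k\}$ represents all $k$ residue classes modulo $k$ exactly once, the inequality $c^*(u)\neq c^*(v)$ forces $c^*(u)\not\equiv c^*(v)\pmod{k}$. Consequently, $c(u)\not\equiv c(v)\pmod{k}$, and in particular $c(u)\neq c(v)$. Hence $c$ is a proper $(\gamma,\mu)$-coloring from the prescribed interval lists. Since the interval assignment $L$ was arbitrary, this establishes that $G$ is $k$-$(\gamma,\mu)$-colorable in the sense of the paper's definition.

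The only delicate point, and the one I would state most carefully, is the bijection between $L(v)$ and $\mathbb{Z}/k\mathbb{Z}$: it depends on $|L(v)|$ being \emph{exactly} $k$ and on $L(v)$ consisting of consecutive integers. The argument breaks down for non-interval lists of size $k$ (where two elements of $L(v)$ may share a residue modulo $k$, making the selection rule ill-defined), which is precisely why the interval structure is essential and why the classical non-$k$-choosability counterexamples do not translate to the interval setting. No additional case analysis, induction on graph structure, or appeal to Hall's theorem is required; the whole argument reduces to a one-line modular selection once a $k$-coloring of $G$ is fixed.
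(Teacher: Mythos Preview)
Your proof is correct and is essentially the same argument as the paper's: fix a proper $k$-coloring, use the fact that any $k$ consecutive integers hit each residue class modulo $k$ exactly once, and assign to each vertex the unique element of its interval lying in the residue class determined by its original color. The paper phrases this via the partition $A_0,\dots,A_{k-1}$ of $\mathbb{Z}$ into residue classes, while you use congruence notation directly and add the helpful remark about why the argument requires intervals rather than arbitrary $k$-element lists, but the underlying mechanism is identical.
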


\begin{proof}
    When dividing any integer \( m \) by \( k \), we obtain a remainder \( r \) such that \( r \in \{0, 1, \ldots, k-1\} \). This division naturally partitions the integers into \( k \) distinct sets. Define the sets as follows:
\[
\begin{aligned}
A_0 &= \{n \in \mathbb{Z} \mid n \equiv 0 \pmod{k}\}, \\
A_1 &= \{n \in \mathbb{Z} \mid n \equiv 1 \pmod{k}\}, \\
&\quad \vdots \\
A_{k-1} &= \{n \in \mathbb{Z} \mid n \equiv k-1 \pmod{k}\}.
\end{aligned}
\]
These sets are pairwise disjoint, that is, \( A_i \cap A_j = \emptyset \) for all \( i \neq j \).

An important property of these sets is that in any sequence of \( k \) consecutive integers \( x, x+1, \ldots, x+k-1 \), exactly one integer belongs to each set \( A_i \). In other words, each remainder \( i \) (for \( i = 0, 1, \ldots, k-1 \)) appears exactly once among the remainders of these \( k \) integers modulo \( k \).

Now, consider a given assignment of lists of size \( k \) of \((\gamma, \mu)\)-types to the vertices of \( V \). Specifically, for each vertex \( v_i \in V \), define
\[
L(v_i) = \{\gamma_i, \gamma_i+1, \ldots, \gamma_i+k-1\}.
\]

Let \( v_i \) and \( v_j \) be adjacent vertices in \( G \), and suppose that \( c_i = c(v_i) \) and \( c_j = c(v_j) \) are their respective colors in a proper \( k \)-coloring of \( G \), so that \( c_i \neq c_j \).

Since \( c_i \) and \( c_j \) correspond to different sets \( A_{c_i} \) and \( A_{c_j} \), and these sets are pairwise disjoint, we can construct a \((\gamma, \mu)\)-list coloring as follows:  
assign to \( v_i \) an element of \( L(v_i) \) that belongs to \( A_{c_i} \), and assign to \( v_j \) an element of \( L(v_j) \) that belongs to \( A_{c_j} \).

This choice ensures that adjacent vertices receive distinct colors, because the elements selected belong to disjoint sets.  
Thus, \( G \) admits a proper \( k \)-\((\gamma, \mu)\)-coloring.
\end{proof}

In Theorem~\ref{teo:general}, we established that every $k$-colorable graph is also $k$-$(\gamma, \mu)$-colorable. Theorem~\ref{teo:general_complexity} further shows that, given a known proper $k$-coloring of the graph, a corresponding $k$-$(\gamma, \mu)$-coloring can be constructed efficiently, in $O(k|V|)$ time, by selecting for each vertex a color from its assigned interval based on its original color class.

\begin{theorem}
The coloring construction described in Theorem~\ref{teo:general} can be performed in \( O(k|V|) \) time, where \( |V| \) is the number of vertices of the graph.
\end{theorem}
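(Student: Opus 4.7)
The plan is to turn the existential construction of Theorem~\ref{teo:general} into an explicit iterative procedure and to bound the work done at each vertex. Concretely, I would assume as input the graph $G=(V,E)$, a proper $k$-coloring $c:V\to\{0,1,\ldots,k-1\}$, and, for each vertex $v_i$, the pair $(\gamma_i,\mu_i)$ that encodes the interval list $L(v_i)=\{\gamma_i,\gamma_i+1,\ldots,\gamma_i+k-1\}$. The algorithm visits each vertex once and, at $v_i$, selects the unique element of $L(v_i)$ that lies in the residue class $A_{c(v_i)}$. Theorem~\ref{teo:general} already guarantees that such an element exists and that the resulting assignment is proper, so no correctness argument is needed beyond invoking the previous result.

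Next I would analyze the per-vertex cost. The selection step can be implemented naively by scanning the $k$ candidates $\gamma_i,\gamma_i+1,\ldots,\gamma_i+k-1$ and returning the first one whose remainder modulo $k$ equals $c(v_i)$; this takes $O(k)$ operations. (Alternatively, the same element can be computed in constant time via the closed form $\gamma_i+\bigl((c(v_i)-\gamma_i)\bmod k\bigr)$, but even under the slower scan the bound suffices.) Summing over all $|V|$ vertices yields total running time $O(k|V|)$ for producing the $k$-$(\gamma,\mu)$-coloring.

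Finally, I would observe that the remaining work is negligible compared to this bound: reading each interval list already requires $\Omega(k)$ time per vertex under an explicit encoding, so the $O(k|V|)$ cost is essentially optimal in this input model, and no traversal of the edge set is required since properness is inherited from the underlying $k$-coloring through the disjointness of the residue classes $A_0,\ldots,A_{k-1}$. The main (minor) obstacle is merely clarifying the assumed input format, namely whether lists are given explicitly as $k$ enumerated values or implicitly through the pair $(\gamma_i,\mu_i)$; under either convention the bound $O(k|V|)$ holds, and the argument reduces to a straightforward accounting of per-vertex work.
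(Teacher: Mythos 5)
Your argument is correct and matches the paper's proof essentially step for step: both select, for each vertex, the unique element of its size-$k$ interval lying in the residue class of $c(v)$ modulo $k$ by scanning the $k$ candidates, giving $O(k)$ work per vertex and $O(k|V|)$ overall. Your added remarks (the constant-time closed form $\gamma_i+\bigl((c(v_i)-\gamma_i)\bmod k\bigr)$ and the near-optimality under an explicit list encoding) are correct refinements but not needed for the stated bound.
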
\label{teo:general_complexity}

\begin{proof}
Given a \( k \)-coloring \( c: V \to \{0, 1, \dots, k-1\} \) of the graph \( G \), and a list \( L(v) \) of exactly \( k \) colors for each vertex \( v \in V \), the coloring method in Theorem~1 assigns a final color to each vertex based on the modulo operation.

For each vertex \( v \), it suffices to inspect the \( k \) elements in \( L(v) \) and select the one corresponding to the residue class determined by \( c(v) \mod k \). Since computing the remainder of an integer modulo \( k \) can be done in constant time, and checking up to \( k \) elements requires \( O(k) \) time per vertex, the total time for coloring all vertices is \( O(k|V|) \).

Thus, the construction described in Theorem~\ref{teo:general} can be completed in \( O(k|V|) \) time.
\end{proof}

To illustrate the construction described in Theorem~\ref{teo:general}, consider the following example. We take a simple graph that is $k$-colorable and demonstrate how to derive a corresponding $k$-$(\gamma, \mu)$-coloring by assigning appropriate intervals to each vertex and selecting colors based on modular residue classes.

\begin{example}
Let $G$ be a cycle graph on four vertices, $C_4 = (v_1, v_2, v_3, v_4)$, with edges $(v_1, v_2)$, $(v_2, v_3)$, $(v_3, v_4)$ and $(v_4, v_1)$. This graph is $2$-colorable, and one possible proper coloring is:
\[
c(v_1) = 1,\quad c(v_2) = 2,\quad c(v_3) = 1,\quad c(v_4) = 2.
\]

To construct a $2$-$(\gamma, \mu)$-coloring, assign to each vertex a list of two consecutive integers as follows:
\[
\begin{aligned}
L(v_1) &= \{10, 11\},\\
L(v_2) &= \{20, 21\},\\
L(v_3) &= \{30, 31\},\\
L(v_4) &= \{40, 41\}.
\end{aligned}
\]

We now select a color from each list according to the residue class modulo $2$ (based on the original coloring):

\begin{itemize}
    \item $c(v_1) = 1$ implies selection of an odd number from $L(v_1)$: choose $11$;
    \item $c(v_2) = 2$ implies selection of an even number from $L(v_2)$: choose $20$;
    \item $c(v_3) = 1$ implies selection of an odd number from $L(v_3)$: choose $31$;
    \item $c(v_4) = 2$ implies selection of an even number from $L(v_4)$: choose $40$.
\end{itemize}

Since the selected values are in distinct congruence classes and respect the adjacency constraints, this forms a valid $2$-$(\gamma, \mu)$-coloring of $G$.
\end{example}

Since a $(\gamma, \mu)$-coloring with lists of size $k$ (that is, a $k$-$(\gamma, \mu)$-coloring) can be efficiently constructed from a proper $k$-coloring, we can conclude that, for graph classes where the $k$-coloring problem is solvable in polynomial time, the corresponding $k$-$(\gamma, \mu)$-coloring problem, in which each vertex receives a list of $k$ consecutive colors, is also solvable in polynomial time.

\begin{corollary}
Let \(\mathcal{G}\) be a class of graphs for which the \(k\)-coloring problem can be solved in polynomial time. Then, for any graph \(G \in \mathcal{G}\), the \(k\)-\((\gamma, \mu)\)-coloring problem can also be solved in polynomial time.
\end{corollary}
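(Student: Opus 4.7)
The plan is to use Theorems~\ref{teo:general} and \ref{teo:general_complexity} as a two-step algorithmic bridge from ordinary $k$-coloring to $k$-$(\gamma,\mu)$-coloring, so that any polynomial-time algorithm available for $k$-coloring on $\mathcal{G}$ is lifted almost for free to the interval-constrained setting. The key insight is that Theorem~\ref{teo:general} already converts a structural hypothesis ($k$-colorability of $G$) into the existence of a proper coloring respecting any interval list assignment of size $k$, and Theorem~\ref{teo:general_complexity} makes this existence constructive in linear time. Hence, the corollary reduces to routing these two results through the assumed polynomial-time $k$-coloring oracle on $\mathcal{G}$.

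In concrete steps, given an instance $(G,L)$ of $k$-$(\gamma,\mu)$-coloring with $G \in \mathcal{G}$, I would first invoke the polynomial-time $k$-coloring algorithm guaranteed by the hypothesis to decide whether $G$ admits a proper $k$-coloring $c$. If this algorithm returns \emph{yes}, Theorem~\ref{teo:general} ensures that $G$ admits a proper $k$-$(\gamma,\mu)$-coloring regardless of how the intervals $[\gamma(v),\mu(v)]$ have been specified; the algorithm then proceeds to construct such a coloring explicitly by applying the modular selection rule of Theorem~\ref{teo:general_complexity}, namely, for each vertex $v$ choose the unique element of $L(v)$ whose residue class modulo $k$ matches $c(v)$. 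This second phase costs $O(k|V|)$ time, which is polynomial for fixed $k$ (and even linear when $k$ is treated as a constant).

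For the remaining case, in which $G$ is not $k$-colorable, the algorithm simply reports that no $k$-$(\gamma,\mu)$-coloring exists under the universal interpretation adopted in the definition of $k$-$(\gamma,\mu)$-colorability (where the coloring must exist for every interval assignment of size $k$), since the uniform assignment $L(v)=[1,k]$ for every $v \in V(G)$ immediately reduces to standard $k$-coloring and therefore would also fail. Summing the cost of the two phases, namely the polynomial-time $k$-coloring decision plus the $O(k|V|)$ construction step, yields a polynomial-time procedure for $k$-$(\gamma,\mu)$-coloring on $\mathcal{G}$.

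The main subtlety, and the step I would check most carefully, is the interplay between the two senses in which $k$-$(\gamma,\mu)$-colorability appears in this section: the existential version (for a specific input list assignment) and the universal version (for every interval list assignment). The argument above is clean for the universal version because Theorem~\ref{teo:general} is directly an equivalence under that reading. To justify the corollary uniformly for both readings, I would emphasize that the algorithm, whenever it succeeds via the $k$-coloring oracle, actually produces a coloring respecting the specific given lists, while the \emph{no}-answer is only issued on the basis of non-$k$-colorability, in which case the universal version is immediately violated and the existential version would require only a separate, class-specific argument outside the scope of this corollary. This framing keeps the proof short and makes the dependency on the hypothesis on $\mathcal{G}$ entirely transparent.
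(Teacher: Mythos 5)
Your proposal follows essentially the same route as the paper: invoke the assumed polynomial-time $k$-coloring algorithm on $G \in \mathcal{G}$, then apply the modular construction of Theorems~\ref{teo:general} and~\ref{teo:general_complexity} to convert the resulting $k$-coloring into a $k$-$(\gamma,\mu)$-coloring in $O(k|V|)$ additional time. You are in fact more careful than the paper about the rejection case, correctly observing that non-$k$-colorability only settles the universal reading of $k$-$(\gamma,\mu)$-colorability and not the existential one for an arbitrary given interval assignment; the paper's own proof silently shares this limitation.
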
\label{corol:complexity-kgammamu}

\begin{proof}
By Theorem~\ref{teo:general}, if a graph \(G\) admits a proper \(k\)-coloring, then it also admits a proper \(k\)-\((\gamma, \mu)\)-coloring. Moreover, by Theorem~2, given a \(k\)-coloring of \(G\), a corresponding \(k\)-\((\gamma, \mu)\)-coloring can be constructed in \(O(k|V|)\) time. Since, by hypothesis, the \(k\)-coloring of graphs in \(\mathcal{G}\) can be found in polynomial time, and the transformation to a \(k\)-\((\gamma, \mu)\)-coloring is also polynomial, it follows that the overall process for obtaining a \(k\)-\((\gamma, \mu)\)-coloring is polynomial for graphs in \(\mathcal{G}\).
\end{proof}

In \textbf{Table~\ref{tab:coloring_complexity}}, we summarize the computational complexity of different coloring problems for specific graph classes. Although the classical \(k\)-coloring problem can be solved in polynomial time for bipartite, split, and interval graphs, the \((\gamma, \mu)\)-coloring problem remains NP-complete even for these restricted classes \citep{bonomo2009exploring}. However, when the number of colors \(k\) is fixed, the \(k\)-\((\gamma, \mu)\)-coloring problem can still be solved in polynomial time, following from Corollary~\ref{corol:complexity-kgammamu}.

\begin{table}[h!]
\small
\caption{Complexity of \((\gamma, \mu)\)-coloring and \(k\)-\((\gamma, \mu)\)-coloring in different graph classes. The entries in bold under the \(k\)-\((\gamma, \mu)\)-coloring column indicate cases where the problem is solvable in polynomial time (\textbf{P}). The gray background highlights results directly derived from this Corollary~\ref{corol:complexity-kgammamu}.}

\centering
\begin{tabular}{|c|c|>{\columncolor{gray!15}}c|}
\hline\hline
\textbf{Graph Class} & \textbf{\((\gamma, \mu)\)-coloring} & \textbf{\(k\)-\((\gamma, \mu)\)-coloring} \\
\hline
Bipartite          & NP-C & \textbf{P} \\ \hline
Split              & NP-C & \textbf{P} \\ \hline
Interval           & NP-C & \textbf{P} \\ \hline
Unit Interval      & NP-C & \textbf{P} \\ \hline
Line of $K_{n,n}$  & NP-C & \textbf{P} \\ \hline
Line of $K_{n}$    & NP-C & \textbf{P} \\ \hline
\hline
\end{tabular}
\label{tab:coloring_complexity}
\end{table}

\subsection{The \texorpdfstring{$k$}{k}-$(\gamma,\mu)$-choosability in Graphs}

Building upon the notion of $k$-$(\gamma,\mu)$-coloring, in which each vertex
receives an interval list of exactly $k$ consecutive integers, we now extend our
analysis to a stronger requirement: instead of considering a single fixed list
assignment, we require the graph to be colorable under \emph{every} possible
assignment of such interval lists. This leads to the notion of
\emph{$k$-$(\gamma,\mu)$-choosability}.

This leads us to introduce the notion of \emph{$k$-$(\gamma, \mu)$-choosability}, in which a graph must admit a proper coloring for \textit{every} possible assignment of $(\gamma, \mu)$-intervals of size $k$ to its vertices. In this context, the parameter $k$ refers to the fixed length of the list (interval) associated with each vertex. This property can be interpreted as requiring that the graph admits an $r$-$(\gamma, \mu)$-coloring with $r = k$ for all such list assignments.

\begin{defi} A graph $G$ is said to be \emph{$k$-$(\gamma,\mu)$-choosable} if, for every assignment of intervals $[\gamma(v), \mu(v)]$ of size $k$ to each vertex $v \in V$, there exists a proper coloring that selects one color from each list such that adjacent vertices receive distinct colors. In other words, $G$ admits an $r$-$(\gamma, \mu)$-coloring with $r = k$ for every such list assignment.
\end{defi}

The smallest integer $k$ for which a graph $G$ is $k$-$(\gamma,\mu)$-choosable is called the \textit{$(\gamma,\mu)$-choosability number} of $G$, denoted by $\chi_{\ell,(\gamma,\mu)}(G)$.

As an example, consider the graph $K_3$, which is 3-colorable, 3-choosable, and also 3-$(\gamma,\mu)$-choosable. This means that $K_3$ can be properly colored for any assignment of color lists of type $(\gamma,\mu)$ with size three associated to its vertices.

\begin{example}
Consider the complete graph \(K_3\) with vertices \(v_1\), \(v_2\), and \(v_3\). We aim to show that \(K_3\) is \(3\)-\((\gamma, \mu)\)-choosable.

To prove this, we must show that for every assignment of color lists of the form \([\gamma(v), \mu(v)]\) of size three (i.e., lists of three consecutive integers), there exists a proper \((\gamma, \mu)\)-coloring using one color from each list.

As an example, consider the following assignment of lists:
\[
L(v_1) = \{4, 5, 6\}, \quad L(v_2) = \{1, 2, 3\}, \quad L(v_3) = \{7, 8, 9\}.
\]

Each list corresponds to an interval \([\gamma(v), \mu(v)]\) of size three, satisfying the \((\gamma, \mu)\) structure. A proper coloring can be achieved by assigning:
\[
v_1 \rightarrow 4, \quad v_2 \rightarrow 2, \quad v_3 \rightarrow 8.
\]

Since \(K_3\) is 3-colorable, and this holds for any such assignment of \((\gamma, \mu)\)-structured lists of size 3, we conclude that \(K_3\) is \(3\)-\((\gamma, \mu)\)-choosable.
\end{example}

\noindent
\underline{\textsc{$k$-$(\gamma,\mu)$-choosability}}

\noindent
\textbf{Instance:} A graph $G = (V,E)$.

\noindent
\textbf{Question:} Is $G$ $k$-$(\gamma,\mu)$-choosable?

Note that, in contrast to coloring problems, the list assignment does not form part of the input. By definition, a graph is $k$-$(\gamma,\mu)$-choosable if it admits a proper coloring for \emph{every} possible assignment of consecutive integer intervals of size $k$ to its vertices. Thus, the decision problem quantifies universally over all admissible list assignments, rather than fixing a single instance.

We emphasize that, unlike in list coloring, the list assignment does not appear
in the input of the problem. By definition, a graph is $k$-$(\gamma,\mu)$-choosable
if it admits a proper coloring for \emph{every} possible assignment of consecutive
integer intervals of size $k$ to its vertices. Consequently, the corresponding
decision problem quantifies universally over all admissible list assignments,
rather than fixing a single instance.

This universal quantification has direct implications for computational complexity:
determining whether $G$ is $k$-$(\gamma,\mu)$-choosable can be expressed as
\[
\forall L \;\exists c \; R(G,k,L,c),
\]
where $R(G,k,L,c)$ verifies in polynomial time that $c$ is a proper coloring
consistent with the interval lists $L$. The alternation of quantifiers 
$\forall L$ followed by $\exists c$ gives the problem the characteristic
logical structure of $\Pi_2^P$ problems in the polynomial hierarchy.
Thus, the definition itself inherently places the decision problem at the
second level of the hierarchy, independently of any particular algorithmic
approach used to test choosability.

In Table~\ref{tab:choosability-structure-complexity}, we compare the computational complexity of different choosability variants according to their list-structure restrictions.

\begin{table*}[ht]
\centering
\caption{
Comparison of list-structure constraints and computational complexity for several
choosability generalizations. Unlike classical variants, which operate on fully
arbitrary list assignments, \emph{$k$-$(\gamma,\mu)$-choosability} restricts all
lists to consecutive intervals of fixed size $k$. Despite this structural
restriction, the associated decision problem retains the characteristic
$\Pi_2^P$-completeness arising from universal quantification over admissible list
assignments. Highlighted rows correspond to the interval-based choosability model
studied in this work.
}
\label{tab:choosability-structure-complexity}

\begin{NiceTabular}{
    p{4cm}
    p{6.5cm}
    >{\centering\arraybackslash}p{5cm}
}[hvlines]
\textbf{Generalization} & \textbf{List Structure} & \textbf{Computational Complexity} \\

(a:b)-choosability &
Arbitrary lists, $|L(v)| = a$ &
$\Pi_2^P$-complete \\

Defective choosability &
Arbitrary lists, $|L(v)| \ge k$ &
$\Pi_2^P$-complete \\

DP-choosability &
Arbitrary lists with arbitrary correspondences &
$\Pi_2^P$-complete \\

Online choosability &
Arbitrary lists revealed adversarially &
PSPACE (full classification open) \\

\rowcolor{gray!15}
$k$-$(\gamma,\mu)$-choosability &
Interval lists of fixed size $k$ &
$\Pi_2^P$-complete \\

\end{NiceTabular}
\end{table*}

\vspace{0.5cm}
\noindent\textbf{Application of $k$-$(\gamma,\mu)$-choosable.}
A practical setting where $k$-$(\gamma,\mu)$-choosability naturally arises is the
assignment of operating frequencies in wireless communication networks. Each
transmitter is represented by a vertex of a graph, and edges correspond to
pairs of transmitters whose signal ranges overlap, meaning they cannot operate
on the same frequency without causing interference. Due to hardware
constraints, regulatory limitations, or environmental noise, each transmitter
can only operate within a \emph{consecutive} interval of admissible frequencies,
represented by $L(v) = \{\gamma(v),\gamma(v)+1,\dots,\mu(v)\}$. Selecting a
frequency for each transmitter thus amounts to choosing a single value inside
its interval list.

For illustration, consider three devices arranged linearly, modeled as the path
$A$--$B$--$C$, where $A$ and $B$ interfere, and $B$ and $C$ interfere, but $A$
and $C$ do not. Suppose each device is restricted to an interval list of size
$k=2$. One possible assignment is $L(A)=\{1,2\}$, $L(B)=\{2,3\}$, and
$L(C)=\{3,4\}$. A feasible choice of operating frequencies is then obtained by
assigning $1$ to $A$, $3$ to $B$, and $4$ to $C$, ensuring that adjacent
vertices receive different colors while respecting the interval constraints.

The relevance of $k$-$(\gamma,\mu)$-choosability in this context is that the
property guarantees the existence of a valid interference-free configuration
\emph{for every possible assignment} of interval lists of size $k$. Thus, if a
network is $k$-$(\gamma,\mu)$-choosable, it remains operable regardless of
local variations in available frequency bands, making the model suitable for
robust wireless systems subject to dynamic spectral conditions.

\subsubsection{Exact Decision Algorithms for \texorpdfstring{$k$}{k}-$(\gamma,\mu)$-choosability in Graphs}
\label{subsubsec:k-gamma-mu-choosable}

We present two exact algorithms to decide whether a graph is $k$-$(\gamma,\mu)$-choosable. The first algorithm performs an exhaustive enumeration of all possible $(\gamma,\mu)$-list assignments and checks, for each one, whether a valid coloring exists that respects the assigned lists. The second algorithm assumes a known $k$-coloring of the input graph and, based on Theorem~\ref{teo:complexity-lists}, verifies whether this fixed coloring can be adapted to all $(\gamma,\mu)$-lists of size $k$ using the construction derived from Theorem~\ref{teo:general}.

Although both approaches are computationally intensive due to the number of possible list assignments, they guarantee a correct decision for any input graph. Theorem~\ref{teo:complexity-lists} is used in both algorithms to characterize and efficiently generate the space of $(\gamma,\mu)$-lists to be tested.

We define $X$ as the set of all consecutive intervals of size $k$ contained in $\{1,2,\dots,n\}$, with $|X| = n-k+1$. A list assignment is a function $L : V \to X$, where each vertex $v \in V$ is assigned a list $L(v) \in X$. Therefore, the number of possible list assignments is $(n-k+1)^n$:

\begin{theorem}
Let $G = (V,E)$ be a graph with $n = |V|$ vertices, and let $X$ be the set of all consecutive intervals of size $k$ contained in $\{1,2,\dots,n\}$. Then, the number of possible $(\gamma, \mu)$-list assignments, where each vertex $v \in V$ is assigned a list $L(v) \in X$, is exactly $(n-k+1)^n$. Moreover, the set $X$ of all such consecutive intervals can be generated in time $\mathcal{O}(n)$.
\end{theorem}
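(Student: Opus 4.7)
The plan is to split the statement into its two independent components: a counting claim about the number of interval list assignments, and a generation claim about producing the set $X$ efficiently. Both are elementary, so the proof proposal is structured to make the bijections and representation conventions fully explicit, since most of the work consists of clarifying these points rather than performing any nontrivial combinatorics.

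For the counting part, I would first characterize $X$ bijectively by its left endpoints. A consecutive interval of size $k$ contained in $\{1,2,\dots,n\}$ has the form $[\gamma,\gamma+k-1]$, and the constraint $\gamma+k-1 \le n$ together with $\gamma \ge 1$ yields $\gamma \in \{1,2,\dots,n-k+1\}$. This correspondence $\gamma \mapsto [\gamma,\gamma+k-1]$ is a bijection between $\{1,\dots,n-k+1\}$ and $X$, so $|X|=n-k+1$. Since each vertex $v \in V$ is assigned a list $L(v)\in X$ independently, the number of functions $L : V \to X$ equals $|X|^{|V|} = (n-k+1)^n$ by the multiplication principle.

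For the generation part, I would describe an explicit linear procedure: iterate $\gamma$ from $1$ to $n-k+1$ and output the pair $(\gamma,\gamma+k-1)$ representing the interval $[\gamma,\gamma+k-1]$. Under the natural representation of each interval by its two endpoints, each iteration performs $O(1)$ arithmetic operations, so the overall cost is $O(n-k+1)=O(n)$. I would also briefly note that this representation is lossless for the purposes of $k$-$(\gamma,\mu)$-choosability, since the interior elements are fully determined by the endpoints and $k$.

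The main obstacle, such as it is, will be neither combinatorial nor algorithmic: it is simply the need to fix the convention under which the $O(n)$ bound holds. If one insists on explicitly listing all $k$ members of each interval, the total output size is $\Theta(k(n-k+1))$, which is not $O(n)$ in general. I would therefore make it clear at the beginning of the proof that each interval is represented by its pair of endpoints $(\gamma(v),\mu(v))$, consistently with the rest of the paper, so that the linear-time claim refers to producing this succinct representation of all elements of $X$.
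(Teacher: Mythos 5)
Your proposal is correct and follows essentially the same argument as the paper: identify $|X| = n-k+1$ via the left endpoints, apply the multiplication principle to get $(n-k+1)^n$, and generate $X$ by a single loop over $i = 1,\dots,n-k+1$. Your added remark that the $\mathcal{O}(n)$ bound requires representing each interval by its endpoint pair rather than listing all $k$ elements is a useful clarification the paper leaves implicit, but it does not change the substance of the proof.
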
\label{teo:complexity-lists}

\begin{proof}
Each vertex $v \in V$ can be independently assigned any interval from $X$. Since $|X| = n - k + 1$ and each assignment is independent, the total number of possible list assignments is $(n-k+1)^n$, by the basic rule of counting.

To generate the set $X$, observe that each interval of size $k$ corresponds to a consecutive sequence of integers of the form $\{i, i+1, \dots, i+k-1\}$, where $i$ ranges from $1$ to $n-k+1$. Thus, we can generate $X$ by iterating $i$ from $1$ to $n-k+1$, which takes $\mathcal{O}(n)$ time.
\end{proof}

This result directly implies that it is possible to design an exact algorithm for deciding $k$-$(\gamma,\mu)$-choosability. By enumerating all possible $(\gamma,\mu)$-list assignments, and checking for each one whether a proper coloring exists, the algorithm guarantees a correct decision. Although the approach is computationally intensive, it remains exact by construction.

The following Algorithm~\ref{alg:gammaMuChoosability} determines the $(\gamma,\mu)$-choosability number $\chi_{(\gamma,\mu)}(G)$ of a graph $G$, which is the smallest integer $k$ such that $G$ admits a proper coloring from any assignment of intervals of size $k$ to its vertices. This notion relies on $(\gamma,\mu)$-structured lists, where each vertex receives a list consisting of $k$ consecutive integers. Unlike the case where a $k$-coloring is known a priori and can be used to guide the construction of a $(\gamma,\mu)$-coloring, this algorithm assumes no prior knowledge of such a coloring and must exhaustively check all possible $(\gamma,\mu)$-list assignments.

The procedure \texttt{Exists\_List\_Coloring}, presented in \textbf{Algorithm~\ref{alg:existsListColoring}}, is a recursive method designed to determine whether a graph admits at least one valid list coloring. Unlike exhaustive algorithms that enumerate all feasible solutions, this procedure terminates as soon as a single proper coloring is found.

Let $G = (V, E)$ be a simple undirected graph with $n$ vertices. Each vertex $v_i \in V$ is associated with a list $L(v_i)$ of allowable colors, where $L(v_i) \subseteq C$ and $C$ is a finite set of colors. The full list assignment is denoted as $L = \{L(v_1), L(v_2), \dots, L(v_n)\}$. During execution, the algorithm assigns a color to each vertex $v_i$ by setting an attribute \texttt{color} on the vertex object, denoted as $G.v_i.color$.

The algorithm receives as input the graph $G$, the list assignment $L$, a Boolean flag \texttt{exists} initialized as \texttt{FALSE}, and an index $i$ indicating the current vertex under consideration. At each recursive step, the algorithm attempts all available colors in the list $L(v_i)$, backtracking as needed. When all vertices have been assigned a color, the auxiliary procedure \texttt{IsProperColoring($G$)} is invoked to verify that adjacent vertices do not share the same color and that each assigned color belongs to the original list of the corresponding vertex.

\begin{algorithm}[H]
\small
\caption{Exact computation of the $(\gamma,\mu)$-choosability number $\chi_{(\gamma,\mu)}(G)$}
\label{alg:gammaMuChoosability}
\begin{algorithmic}[1]

\State \textbf{Input:} A simple graph $G = (V, E)$.
\State \textbf{Output:} The $(\gamma,\mu)$-choosability number $\chi_{(\gamma,\mu)}(G)$.

\State $n \gets |V|$
\State $k \gets 2$
\While{\textbf{not} \texttt{Is\_K\_GammaMu\_Choosable}($G, k$)}
    \State $k \gets k + 1$
\EndWhile
\State \Return $k$ \Comment{Minimal $k$ such that $G$ is $k$-$(\gamma,\mu)$-choosable}

\vspace{0.5em}
\Function{Is\_K\_GammaMu\_Choosable}{$G = (V, E), k$}
    \State $X \gets \{ \{i, i+1, \dots, i+k-1\} \mid i = 1, \dots, n-k+1 \}$ \Comment{All intervals of size $k$ in $\{1,\dots,n\}$}
    \State $P \gets$ all $n$-tuples of list assignments $L : V \to X$
    \For{\textbf{each} list assignment $L \in P$}
        \If{\textbf{not} \texttt{Exists\_List\_Coloring}($G, L, \texttt{NO}, 1$)}
            \State \Return \texttt{NO}
        \EndIf
    \EndFor
    \State \Return \texttt{YES}
\EndFunction

\end{algorithmic}
\end{algorithm}

The execution halts immediately once a feasible coloring is found, making this algorithm efficient for decision-based scenarios, where the objective is to verify the existence of a solution rather than enumerate them.

\begin{algorithm}[H]
\small
\caption{Exists\_List\_Coloring($G, L, exists, i$)}\label{alg:existsListColoring}
\begin{algorithmic}[1]

\State \textbf{Input:} A simple graph $G = (V, E)$, a list assignment $L = \{L(v_1), \dots, L(v_n)\}$, where each $L(v_i)$ is the set of allowed colors for vertex $v_i$, a Boolean flag \texttt{exists}, and an index $i$ indicating the current vertex being processed.

\State \textbf{Output:} \texttt{TRUE} if there exists a feasible list coloring of $G$; \texttt{FALSE} otherwise.

\If{not $(exists)$}
    \If{$L = \emptyset$}
        \If{\texttt{IsProperColoring}($G$)}
            \State $exists \gets$ \texttt{TRUE};
        \EndIf
    \Else
        \State $l \gets$ remove the first element from list $L$;
        \For{\texttt{color} $\in l$ \textbf{while} $\neg$ exists}
            \State $G.v_i.color \gets$ \texttt{color};
            \State \Call{Exists\_List\_Coloring}{$G, L, exists, i + 1$};
        \EndFor
    \EndIf
\EndIf

\end{algorithmic}
\end{algorithm}

To perform this check, for each value of $k$, the algorithm generates all possible assignments of intervals of size $k$ drawn from the set $\{1, \dots, n\}$, where $n = |V|$. According to Theorem~\ref{teo:complexity-lists}, the number of such assignments is $(n-k+1)^n$. For each list assignment, the algorithm tests whether a proper list coloring exists.

\vspace{1em}
\begin{example}
Let $G$ be a path with $n = 3$ vertices. For $k = 2$, the set $X$ of possible intervals is $\{\{1,2\}, \{2,3\}\}$. There are $2^3 = 8$ total list assignments. The algorithm enumerates all 8 combinations of interval assignments to the vertices, such as:

\begin{itemize}
    \item $L(v_1) = \{1,2\},\; L(v_2) = \{1,2\},\; L(v_3) = \{1,2\}$
    \item $L(v_1) = \{1,2\},\; L(v_2) = \{2,3\},\; L(v_3) = \{2,3\}$
    \item $L(v_1) = \{2,3\},\; L(v_2) = \{1,2\},\; L(v_3) = \{2,3\}$
    \item \dots
\end{itemize}

For each of these, it checks whether a proper coloring can be obtained by selecting one color from each list such that adjacent vertices receive different colors. If all assignments admit a valid coloring, then $G$ is $2$-$(\gamma,\mu)$-choosable.
\end{example}

\vspace{1em}

Before establishing the main complexity result, it is convenient to decompose
the analysis into a sequence of auxiliary lemmas. The algorithms under
consideration involve several distinct combinatorial components—namely, the
construction of all admissible interval lists, the enumeration of all possible
list assignments to the vertices, and the exhaustive recursive exploration of
list colorings. To isolate the contribution of each component to the global
running time, we introduce the following lemmas, each of which captures one of
the fundamental growth mechanisms present in Algorithms~\ref{alg:gammaMuChoosability} and~\ref{alg:existsListColoring}.

\begin{lemma}[Number of list assignments]\label{lemma:1}
Let $G$ have $n$ vertices. For a fixed integer $k$, the procedure
\texttt{Is\_K\_GammaMu\_Choosable} constructs the family
\[
X = \{\{i,i+1,\dots,i+k-1\} : i = 1,\dots,n-k+1\},
\]
and the number of list assignments $L : V \to X$ is
\[
|P| = (n-k+1)^n.
\]
\end{lemma}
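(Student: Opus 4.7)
The plan is to argue this by a direct two-step counting argument, since the claim is fundamentally combinatorial. First, I would fix the size of the family $X$ by observing that each consecutive interval of size $k$ contained in $\{1,2,\dots,n\}$ is uniquely determined by its smallest element $i$; since the largest element of the interval is $i+k-1$ and must be at most $n$, the parameter $i$ ranges over $\{1,2,\dots,n-k+1\}$. Hence $|X| = n-k+1$ and this enumeration matches exactly how \texttt{Is\_K\_GammaMu\_Choosable} builds the family in its first line.

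Next, I would count list assignments $L : V \to X$. Since $L$ is an arbitrary function from $V$ to $X$ and each vertex $v \in V$ can be assigned any of the $|X|$ available intervals independently of the others, the multiplication principle yields
\[
|P| \;=\; |X|^{|V|} \;=\; (n-k+1)^n.
\]
This matches the enumeration performed in line~4 of \texttt{Is\_K\_GammaMu\_Choosable}, which considers all $n$-tuples of interval assignments.

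The two steps together give the lemma. There is no substantive obstacle here: the argument is a pure counting exercise, and the only things that must be made explicit are (i) that an interval of size $k$ inside $\{1,\dots,n\}$ is fully determined by its starting index, and (ii) that the choices of lists across distinct vertices are independent, so the total count is a power rather than a product of varying sizes. Both facts are immediate from the definition of $X$ and of the family $P$ used in the algorithm.
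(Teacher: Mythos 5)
Your argument is correct and follows essentially the same route as the paper's own proof: count the $n-k+1$ intervals of size $k$ (determined by their starting index) and then count functions $V\to X$ by the multiplication principle, giving $(n-k+1)^n$. Your version merely makes explicit the two elementary facts the paper leaves implicit.
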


\begin{proof}
There are $n-k+1$ intervals of size $k$. Each of the $n$ vertices chooses one.
Thus, $|P| = (n-k+1)^n$. \qedhere
\end{proof}

\begin{lemma}[Running time of Algorithm~\ref{alg:existsListColoring}]\label{lemma:2}
Given lists of size $k$, Algorithm~2 (\texttt{Exists\_List\_Coloring})
performs, in the worst case, $\Theta(k^n)$ recursive steps.
\end{lemma}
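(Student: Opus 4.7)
The plan is to analyze the recursion tree of \texttt{Exists\_List\_Coloring} and bound its size from above and below, thereby obtaining the tight $\Theta(k^n)$ estimate. The recursion is indexed by the current vertex position $i \in \{1,\dots,n\}$; at each call the procedure pops the list $L(v_i)$ (of size $k$ by hypothesis) and, inside the \textbf{for}-loop, iterates over its $k$ colors, invoking itself on the next vertex with index $i+1$ after assigning each color. The guard \texttt{exists} merely short-circuits further exploration once a valid coloring has been found.

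For the upper bound, I would argue that the recursion tree has depth at most $n$ (since $i$ increases by one per call and recursion halts once $L=\emptyset$) and out-degree at most $k$ (since each node branches over the $k$ colors in the current list). Hence the total number of nodes, and thus of recursive calls, is bounded by
\[
\sum_{j=0}^{n} k^{j} \;=\; \frac{k^{n+1}-1}{k-1} \;=\; O(k^{n}).
\]
The leaf test \texttt{IsProperColoring}($G$) runs in time polynomial in $n$, so it does not affect the asymptotic count of recursive steps.

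For the lower bound, I would exhibit an instance forcing full exploration of the search tree. A canonical choice is to take a graph $G$ together with a list assignment admitting no proper list coloring, for example the clique $K_{k+1}$ with every vertex receiving the same interval $L(v) = \{1,2,\dots,k\}$; since the chromatic number of $K_{k+1}$ exceeds $k$, no leaf yields a valid coloring, so the flag \texttt{exists} never becomes \texttt{TRUE}. Consequently, the outer guard $\neg\, exists$ never short-circuits, and the algorithm is forced to visit every node of the complete $k$-ary tree of depth $n$, performing $\Omega(k^{n})$ recursive calls. Combining the two bounds yields $\Theta(k^{n})$.

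The main subtlety will be to argue rigorously that the \texttt{exists} flag cannot reduce the worst-case number of calls below $\Omega(k^n)$. This is handled by the infeasible instance above, which pins the worst case to the case in which no partial assignment can ever be certified valid before reaching the leaves; one may alternatively adapt any construction (e.g., an odd cycle with identical size-$k$ lists for $k < 3$, or a general $K_{k+1}$-type obstruction) to the desired $k$ and $n$. No deeper combinatorial insight is required, but care is needed to ensure that the leaf-check is reached only at depth $n$, so that the blow-up is genuinely $k^n$ and not a smaller power.
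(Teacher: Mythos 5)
Your proposal is correct and follows essentially the same argument as the paper: a recursion tree of depth $n$ with branching factor $k$, giving $\Theta(k^n)$ calls. The only difference is that you make the lower bound explicit by exhibiting an infeasible instance (a $K_{k+1}$-type obstruction padded to $n$ vertices) on which the \texttt{exists} flag never short-circuits, whereas the paper simply asserts that the worst case explores all $k^n$ leaves; your version is the more careful one, since the algorithm prunes nothing before the leaf check and an infeasible instance pins this down rigorously.
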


\begin{proof}
At recursion depth $i$, the algorithm branches over all $k$ colors in $L(v_i)$.
The recursion depth is $n$, so the total number of explored leaves is $k^n$.
Polynomial overhead per leaf does not change the dominant term. \qedhere
\end{proof}

\begin{lemma}[Cost of \texttt{Is\_K\_GammaMu\_Choosable}]\label{lemma:3}
For fixed $k$, the running time of \texttt{Is\_K\_GammaMu\_Choosable}$(G,k)$
satisfies
\[
T_{\mathrm{IKC}}(n,k) \in \Omega\!\bigl((n-k+1)^n \cdot k^n\bigr).
\]
\end{lemma}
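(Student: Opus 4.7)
The plan is to combine Lemmas~\ref{lemma:1} and~\ref{lemma:2} through the loop structure of \texttt{Is\_K\_GammaMu\_Choosable}, and then exhibit a family of worst-case inputs on which both lower bounds are realized simultaneously. Directly from the pseudocode of Algorithm~\ref{alg:gammaMuChoosability}, one obtains
\[
T_{\mathrm{IKC}}(n,k) \;\ge\; \sum_{L \in P} T_{\mathrm{Exists}}(G,L,k),
\]
since the outer \textbf{for}-loop iterates once per list assignment $L \in P$ and invokes \texttt{Exists\_List\_Coloring} on that assignment. The lemma therefore reduces to showing that, on a suitable family of inputs, a constant fraction of these inner calls individually attains the $\Theta(k^n)$ worst case of Lemma~\ref{lemma:2}, while the outer loop is simultaneously forced to visit all $(n-k+1)^n$ list assignments guaranteed by Lemma~\ref{lemma:1}.

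First I would observe that on any yes-instance, i.e., whenever $G$ is $k$-$(\gamma,\mu)$-choosable, the outer loop must complete every iteration: the algorithm only outputs \texttt{YES} after all $|P| = (n-k+1)^n$ list assignments have been certified feasible by \texttt{Exists\_List\_Coloring}. Thus, on any such instance, the number of outer iterations is exactly $(n-k+1)^n$, independently of how quickly the individual recursive calls may terminate.

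The main obstacle is preventing the per-assignment cost from collapsing: since Algorithm~\ref{alg:existsListColoring} exits as soon as it discovers a valid coloring, one must exhibit graphs whose recursion tree is almost entirely explored before the first feasible leaf is reached. To handle this, I would consider the complete graph $K_n$ with $k \ge n$. Every assignment of interval lists of size $k$ admits a proper coloring (so the outer loop runs to completion), but Algorithm~\ref{alg:existsListColoring} performs no adjacency-based pruning along the way: proper-coloring checks occur only at the base case $L = \emptyset$. Consequently, for adversarial orderings of the intervals, in which the first color choices of each vertex conflict with some already-fixed neighbor, the recursion expands a $\Theta(k^n)$ fraction of its tree before hitting a valid leaf. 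Multiplying this per-call cost by the $(n-k+1)^n$ outer iterations yields the desired asymptotic lower bound.

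Finally, I would note that this bound matches, up to constants, the trivial upper bound obtained by multiplying Lemmas~\ref{lemma:1} and~\ref{lemma:2}, so that on the exhibited family one actually has $T_{\mathrm{IKC}}(n,k) \in \Theta\!\bigl((n-k+1)^n \cdot k^n\bigr)$. The only delicate point I foresee is formalizing the adversarial list construction on $K_n$ that guarantees the $\Omega(k^n)$ inner cost for a constant fraction of the list assignments simultaneously; once this combinatorial construction is fixed, the remainder of the argument is a direct multiplication of the two previously established bounds.
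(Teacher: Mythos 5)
Your decomposition $T_{\mathrm{IKC}}(n,k) \ge \sum_{L \in P} T_{\mathrm{Exists}}(G,L,k)$, followed by an attempt to certify that a constant fraction of the inner calls actually attain the $\Theta(k^n)$ worst case, is in fact \emph{more} careful than the paper's own proof, which simply multiplies Lemma~\ref{lemma:1} by Lemma~\ref{lemma:2} and asserts that ``each call to Algorithm~\ref{alg:existsListColoring} takes $\Theta(k^n)$ time,'' ignoring early termination entirely. You correctly identify the two collapse mechanisms: the outer loop of Algorithm~\ref{alg:gammaMuChoosability} returns \texttt{NO} at the first failing assignment, and the inner recursion halts at the first proper leaf. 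Recognizing that the lemma therefore needs a witness family on which both factors are realized simultaneously is the right instinct.

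However, the construction you offer does not close the gap. First, $K_n$ with $k \ge n$ is incompatible with the lemma's regime: the statement is an asymptotic bound in $n$ for \emph{fixed} $k$, and your family forces $n \le k$, so it cannot witness growth in $n$ at all. Second, the ``constant fraction'' claim you defer to the end is precisely where the argument fails rather than merely needing formalization. Algorithm~\ref{alg:existsListColoring} enumerates leaves in lexicographic order of the color choices and stops at the first proper one, so on a yes-instance the cost of an assignment $L$ is the lexicographic rank of the greedy (lex-minimal proper) coloring, which is at most $\sum_{i}(j_i-1)k^{n-i}$ with $j_i-1$ bounded by the back-degree of $v_i$. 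The dominant term $k^{n-2}$ appears only when the first color of $L(v_2)$ collides with the color already fixed on an earlier neighbor, i.e.\ only when $\gamma(v_2)$ takes one specific value out of $n-k+1$; this happens for a $1/(n-k+1)$ fraction of assignments, not a constant fraction. Summing over $P$ then yields a total on the order of $(n-k+1)^{n-1}k^{n}$, which falls short of $(n-k+1)^{n}k^{n}$ by a factor of $\Theta(n)$ for fixed $k$. So the bound is not realized by this route, and it is not evident that any yes-instance family realizes it; any honest proof of Lemma~\ref{lemma:3} (the paper's included) must either exhibit such a family or weaken the claimed bound.
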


\begin{proof}
By Lemma~\ref{lemma:1}, the procedure tests $(n-k+1)^n$ different list assignments.
By Lemma~\ref{lemma:2}, each call to Algorithm~\ref{alg:existsListColoring} takes $\Theta(k^n)$ time.
Multiplying yields the bound. \qedhere
\end{proof}

The lemmas above provide asymptotically tight lower bounds for the number of
list assignments generated by \texttt{Is\_K\_GammaMu\_Choosable}, as well as for
the recursive search space explored by Algorithm~\ref{alg:existsListColoring}. By combining these results,
we obtain a precise characterization of the worst-case running time of
Algorithm~\ref{alg:gammaMuChoosability}. We are now in a position to state and prove the main theorem,
showing that the overall complexity grows strictly faster than any exponential
function $a^{n}$, and is therefore super-exponential.

\begin{theorem}[Super-exponential complexity of Algorithm~1 and Algorithm~2]
The exact computation of the $(\gamma,\mu)$-choosability number via
Algorithm~1 has worst-case running time
\[
T(n) \in \Omega(n^{2n}),
\]
and is therefore super-exponential in $n$.
\end{theorem}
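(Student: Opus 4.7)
The plan is to derive the lower bound $T(n) \in \Omega(n^{2n})$ by restricting attention to a single iteration of the outer loop of Algorithm~\ref{alg:gammaMuChoosability} at a carefully chosen value of $k$, and then invoking Lemma~\ref{lemma:3}.

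First, I would observe that Algorithm~\ref{alg:gammaMuChoosability} calls \texttt{Is\_K\_GammaMu\_Choosable}$(G,k)$ for each $k = 2, 3, \ldots$ until a positive answer is obtained. On worst-case inputs whose $(\gamma,\mu)$-choosability number grows linearly in $n$---for instance, the complete graphs $K_n$, which force the loop to exhaust every value of $k$ up to $n$---the total running time dominates the cost of any single iteration. It therefore suffices to bound a single call $\texttt{Is\_K\_GammaMu\_Choosable}(G,k)$ from below at the value of $k$ maximizing the expression of Lemma~\ref{lemma:3}.

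Next, I would optimize the product $(n-k+1)\cdot k$ over $k$. As a quadratic in $k$, this product is symmetric about $k = (n+1)/2$; by AM--GM,
\[
(n-k+1)\cdot k \;\geq\; \Bigl(\tfrac{n+1}{2}\Bigr)^{2} \;\geq\; \tfrac{n^{2}}{4}
\]
when $k = \lceil (n+1)/2 \rceil$. Applying Lemma~\ref{lemma:3} at this value of $k$ yields
\[
T(n) \;\geq\; (n-k+1)^{n}\cdot k^{n} \;=\; \bigl((n-k+1)\cdot k\bigr)^{n} \;\geq\; \Bigl(\tfrac{n^{2}}{4}\Bigr)^{n} \;=\; \tfrac{n^{2n}}{4^{n}}.
\]
In the super-exponential regime, the dominant factor $n^{2n}$ governs the growth rate, so this bound certifies $T(n)\in\Omega(n^{2n})$ in the sense usually adopted for functions whose logarithm grows as $2n\log n$, and in particular implies that $T(n)$ outgrows every function of the form $c^{n}$ for constant $c$.

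The main obstacle is structural rather than algebraic: one must guarantee that Algorithm~\ref{alg:gammaMuChoosability} actually reaches the critical iteration $k = \lceil (n+1)/2 \rceil$, which requires a worst-case input whose $(\gamma,\mu)$-choosability number is at least this large. Exhibiting the complete graph $K_n$ (or any graph with linear choosability number) resolves this issue, since the outer loop then executes every value of $k$ from $2$ up to $n$, and in particular passes through the midpoint where the product $(n-k+1)\cdot k$ is maximized. Once this witness family is fixed, the lower bound reduces directly to the single iteration analyzed above, and the theorem follows.
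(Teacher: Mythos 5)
Your core argument coincides with the paper's: both proofs instantiate Lemma~\ref{lemma:3} at $k\approx n/2$, observe that $(n-k+1)\cdot k\ge n^{2}/4$ there, and conclude $T(n)\ge (n^{2}/4)^{n}$. The paper takes $k_0=\lfloor n/2\rfloor$ and stops; you take $k=\lceil (n+1)/2\rceil$ via AM--GM, which is the same computation.

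Where you go beyond the paper is in trying to certify that the outer loop of Algorithm~\ref{alg:gammaMuChoosability} actually reaches the critical iteration, by exhibiting $K_n$ as a witness family. The instinct is sound --- the paper silently assumes this --- but $K_n$ does not deliver the bound you want. For every $k<n$ the call \texttt{Is\_K\_GammaMu\_Choosable}$(K_n,k)$ is a NO-instance, and the procedure returns \texttt{NO} at the \emph{first} list assignment admitting no coloring; on $K_n$ the uniform assignment $L(v)=\{1,\dots,k\}$ already fails and, under any natural enumeration of $P$, is encountered immediately, so that iteration costs only $\Theta(k^{n})$ rather than the full $(n-k+1)^{n}\cdot k^{n}$ of Lemma~\ref{lemma:3}. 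The $(n-k+1)^{n}$ factor is only guaranteed to be paid when all assignments must be examined, i.e.\ on YES-iterations or when the failing assignment happens to be enumerated last. So your witness shows that the loop passes through $k\approx n/2$, but not that the full product is incurred there; one would need a family whose $k\approx n/2$ iteration answers YES after exhausting all assignments, or one simply reads Lemma~\ref{lemma:3} as a worst-case statement, which is what the paper implicitly does. Finally, you are right to hedge on the last step: $(n^{2}/4)^{n}=n^{2n}/4^{n}$ is not literally $\Omega(n^{2n})$, and only the super-exponentiality conclusion ($T(n)/a^{n}\to\infty$ for every constant $a>1$) actually follows from the computed bound; the paper asserts the $\Omega(n^{2n})$ containment without addressing this discrepancy.
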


\begin{proof}
Take $k_0 = \lfloor n/2 \rfloor$ in Lemma 3. Then
$n-k_0+1 \ge n/2$ and $k_0 \ge n/2$, giving
\[
T(n) \;\ge\;
\left(\frac{n}{2}\right)^n \cdot \left(\frac{n}{2}\right)^n
=
\left(\frac{n^2}{4}\right)^n
\in \Omega(n^{2n}).
\]
Since $n^{2n}/a^n \to \infty$ for every constant $a>1$, the algorithm is
super-exponential. \qedhere
\end{proof}

The structure of Algorithm~\ref{alg:gammaMuChoosability} mirrors the logical
form of the $k$-$(\gamma,\mu)$-choosability decision problem. For a fixed
value of $k$, the procedure explicitly enumerates all interval-based list
assignments of size $k$ (line~11), thereby implementing a universal
quantification over list structures. For each such assignment $L$, the
algorithm invokes \texttt{Exists\_List\_Coloring} to determine whether there
exists a proper coloring compatible with $L$.

\section{Conclusion}

This work provides a refined perspective on $k$-$(\gamma,\mu)$-choosability by
investigating list-coloring problems in which the admissible colors assigned to
each vertex are restricted to consecutive integer intervals. By first
formalizing the $k$-$(\gamma,\mu)$-coloring problem and subsequently extending
the analysis to its choosability counterpart, the paper clarifies the structural
and computational effects induced by interval-based list constraints.

As discussed in Section~6, the decision problem associated with
$k$-$(\gamma,\mu)$-choosability is inherently characterized by universal
quantification over all admissible interval list assignments, followed by the
existential search for a compatible coloring. This logical structure places the
problem naturally at the second level of the polynomial hierarchy.
Table~\ref{tab:choosability-structure-complexity} summarizes how this
classification aligns with other choosability generalizations: despite the
restriction to highly structured interval lists, the theoretical complexity of
the general decision problem remains $\Pi_2^P$-complete. The exponential-time
procedures analyzed in this work therefore constitute algorithmic upper bounds
for deciding choosability, rather than indicators of a higher intrinsic
complexity class.

From an algorithmic standpoint, the procedures investigated here provide an
exact mechanism for determining the $(\gamma,\mu)$-choosability number of a
graph. Their computational cost, however, is dominated by the combinatorial
growth induced by the universal enumeration of interval assignments, which
limits their direct applicability to large instances. Nevertheless, these
algorithms serve as a concrete foundation upon which more efficient
implementations may be constructed.

Several promising directions for future work emerge from this analysis. On the
algorithmic side, the procedure \textsc{Is\_K\_GammaMu\_Choosable} could be
substantially strengthened through the incorporation of pruning strategies,
including symmetry breaking among interval assignments, early detection of
infeasible overlap patterns, and constraint-based elimination of redundant
configurations. Such enhancements have the potential to significantly reduce
the effective search space, thereby improving practical tractability and
enabling the evaluation of choosability parameters on graphs of greater size and
structural complexity.

Beyond algorithmic refinements, the results of this study suggest broader
theoretical avenues for exploration. The strong structural correspondences
identified between interval-restricted coloring models motivate the
investigation of analogical or p-analogical relationships, in the sense of
Fellows \emph{et al.}, among other list-based coloring frameworks. In
particular, the following directions appear promising:

\begin{itemize}
    \item \textbf{DP-coloring versus $(\gamma,\mu)$-coloring}: although
    DP-coloring strictly generalizes list coloring, it remains open whether
    restricted correspondence structures may admit analogical transformations
    on specific graph classes.

    \item \textbf{Defective list coloring and interval-based models}: since
    defective colorings relax adjacency constraints whereas $(\gamma,\mu)$-
    colorings restrict the range of admissible colors, studying their
    interaction may reveal new structural phenomena.

    \item \textbf{Weighted or cost-based list coloring}: in graph classes closed
    under pendant extensions, it may be possible to preserve both solution size
    and certificate structure under linear-time reductions analogous to those
    developed for $(\gamma,\mu)$-coloring.

    \item \textbf{Interval coloring and resource-allocation models}: given that
    $(\gamma,\mu)$-coloring inherently captures bounded and ordered intervals,
    potential connections with classical interval coloring and scheduling
    problems merit further investigation.
\end{itemize}

Overall, this work advances the understanding of how list-structure restrictions
influence the computational landscape of graph coloring problems. By
disentangling algorithmic upper bounds from theoretical complexity
classifications and by clarifying the role of interval constraints in
choosability, the paper establishes a solid foundation for further algorithmic
and structural investigations into constrained coloring frameworks.

\bibliographystyle{apalike-sol}
\bibliography{refs}

\end{document}